\newcommand{\blue}[1] {\textcolor{blue}{#1}}
\def\polylog{\mathrm{polylog}}
\renewcommand{\epsilon}{\varepsilon}
\newcommand{\PNF}{\mathrm{PNF}}
\newcommand{\LPN}{{\mathcal L}_{\textrm{PN}}}
\newcommand{\lext}{\mathcal{L}_{\textrm{ext}}}
\newcommand{\pext}{\textit{ext}}
\newcommand{\pn}{\textit{pnw}}
\newcommand{\crit}{\textit{crit}}
\newcommand{\pushright}[1]{\ifmeasuring@#1\else\omit\hfill$\displaystyle#1$\fi\ignorespaces}
\newcommand{\pushleft}[1]{\ifmeasuring@#1\else\omit$\displaystyle#1$\hfill\fi\ignorespaces}
\newcommand{\floor}[1]{\lfloor #1 \rfloor}
\begin{document}

\title{Normal, Abby Normal, Prefix Normal}

\author{P\'eter Burcsi\inst{1} \and Gabriele Fici\inst{2} \and Zsuzsanna Lipt\'ak\inst{3} \and Frank Ruskey\inst{4} \and Joe Sawada\inst{5}}

\institute{Dept.\ of Computer Algebra,
E\"otv\"os Lor\'and Univ., Budapest, Hungary, 
\email{bupe@compalg.inf.elte.hu}
\and Dip.\ di Matematica e Informatica, University of Palermo, Italy, 
\email{gabriele.fici@math.unipa.it} 
\and Dip.\ di Informatica, University of Verona, Italy, 
\email{zsuzsanna.liptak@univr.it}
\and Dept.\ of Computer Science, University of Victoria, Canada, 
\email{ruskey@cs.uvic.ca}
\and School of Computer Science, University of Guelph, Canada, 
\email{jsawada@uoguelph.ca}
}

\date{}

\maketitle

\begin{abstract}
A prefix normal word is a binary word with the property that no substring has more 1s than the prefix of the same length. This class of words is important in the context of binary jumbled pattern matching. In this paper we present results about the number $\pn(n)$ of prefix normal words of length $n$, showing that $\pn(n) =\Omega\left(2^{n - c\sqrt{n\ln n}}\right)$ for some $c$ and $\pn(n) = O \left(\frac{2^n (\ln n)^2}{n}\right)$. We introduce efficient algorithms for testing the prefix normal property and a ``mechanical algorithm'' for computing prefix normal forms. We also include games which can be played with prefix normal words. In these games Alice wishes to stay normal but Bob wants to drive her ``abnormal'' -- we discuss which parameter settings allow Alice to succeed.
\end{abstract}

{\em Keywords:} prefix normal words, binary jumbled pattern matching, normal forms, enumeration, membership testing, binary languages

%%%%%%%%%%%%%%%%%%%%%%%%%%%%%%%%%%%%%%%%%%%%%%%%%%%%%%%%%%%%%%%%%%%%%%%%%%%%%%%%%%%
\section{Introduction}
%%%%%%%%%%%%%%%%%%%%%%%%%%%%%%%%%%%%%%%%%%%%%%%%%%%%%%%%%%%%%%%%%%%%%%%%%%%%%%%%%%%

Consider the binary word $w=10100110110001110010$. Does it have a substring of length $11$ containing exactly $5$ ones? 
In Fig.~\ref{fig:esempio} the word $w$ is represented by the black line (go up and right for a $1$, down and right for a $0$), while the grid points within the area between the two lighter lines form the {\em Parikh set} of $w$: the set of vectors $(x,y)$ s.t.\ some substring of $w$ contains exactly $x$ ones and $y$ zeros. Since the point $(5,6)$ lies within the area bounded by the two lighter lines, we see that the answer to our question is `yes'. (Don't worry, more detailed explanation will follow soon.) 
Now, this paper is about the lighter lines, called \emph{prefix normal words.}

\vspace{-.2cm}
\begin{figure}[h]
\begin{center}
  \includegraphics[height=35mm]{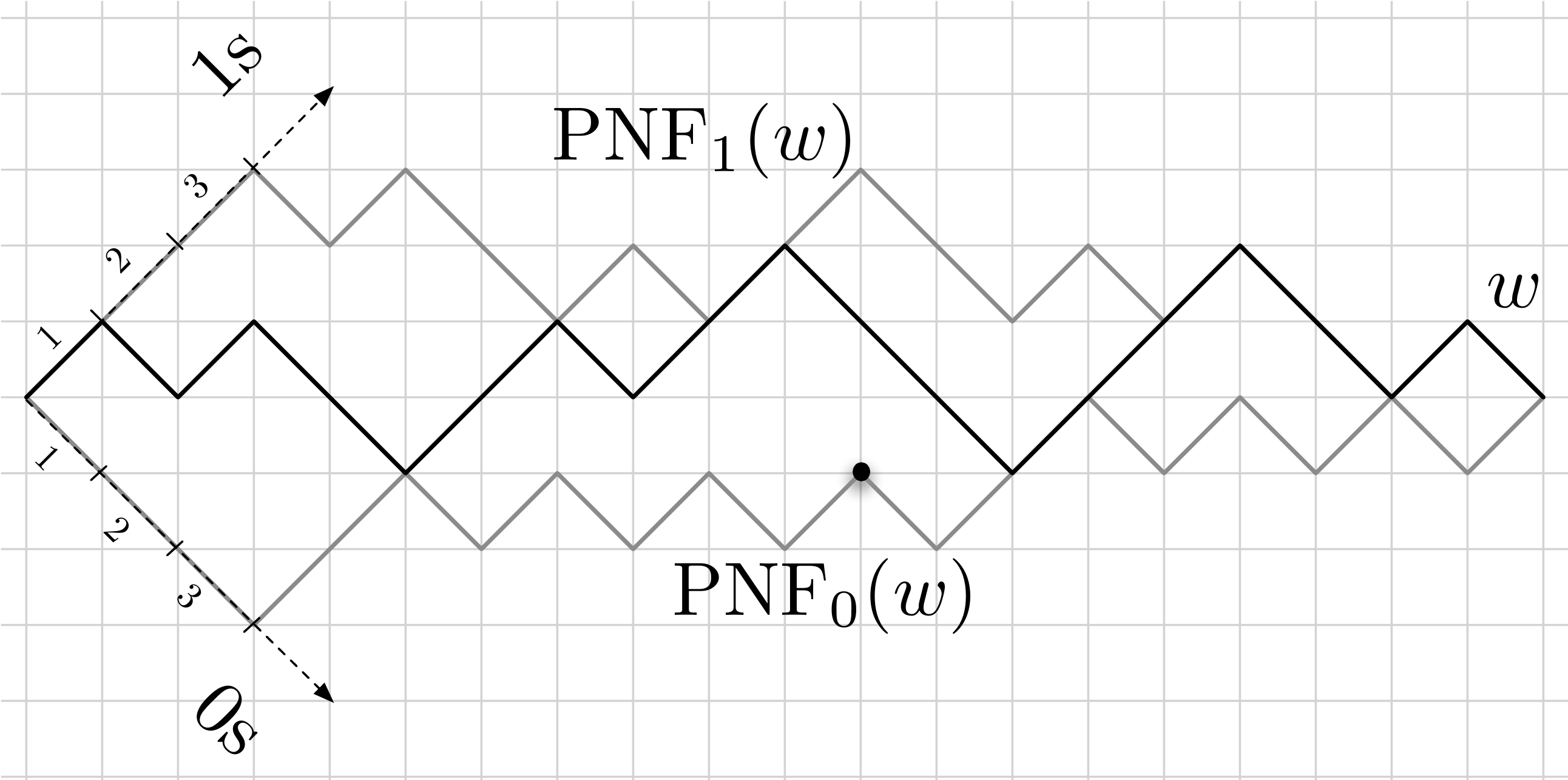}
  \caption{The word $w=10100110110001110010$ (dark line), its prefix normal forms $\PNF_{1}(w)=11101001011001010010$ and $\PNF_{0}(w)=00011010101011010101$ (lighter lines); the region between the two is the Parikh set of $w$; e.g.\  $w$ has a substring containing $5$ ones and $6$ zeros (black dot). Note that the axes are rotated by 45 degrees clockwise.}\label{fig:esempio}
\end{center}
\end{figure}

\noindent{\bf Prefix normal words%
:} A binary word $w$ is called {\em prefix normal} (with respect to $1$) if no substring of $w$ has more $1$s then the prefix of the same length\footnote{When not specified, we mean prefix normal w.r.t. 1.}. 
For example, $110101101100100$ is not prefix normal because it has a substring of length $5$ with $4$ ones, while the prefix of length $5$ has only $3$ ones. In~\cite{FL11} it was shown that to every word $w$, one can assign two prefix normal words, the {\em prefix normal forms} (PNF) of $w$ (w.r.t.\ $1$ and w.r.t.\ $0$), and that these are precisely the lines bounding $w$'s Parikh set from above (w.r.t. $1$) resp.\ from below (w.r.t.\ $0$), interpreted as binary words. 

\medskip

\noindent{\bf Prefix normal games:} Before we further elaborate on the connection between the initial problem and prefix normal words, let's see how well you have understood the definition. To this end, we define a two-player game. At the start of the game Alice and Bob have $n$ free positions. Alice moves first: she picks a position and sets it to $0$ or $1$. Then in alternating moves, they pick an empty position and set it. The game ends after $n$ moves. Alice wins if and only if the resulting binary word is prefix normal.

\begin{example} Here is an example run. We have $n=5$. Alice sets the first bit to $1$, then Bob sets the second bit to $0$. Now Alice sets the $4$th bit to $0$, and she has won, since whichever position Bob chooses, she will set the remaining position to $0$, thus ensuring that the word is prefix normal.\\

\begin{tabular}{ll @{\hspace{5mm}} *{5}{c} @{\hspace{1cm}} ll @{\hspace{5mm}} *{5}{c}}
1. & {\em start} & \_  & \_ & \_ & \_ & \_ & 3. & {\em Bob} & 1 & 0 & \_ & \_ & \_ \\
2. & {\em Alice} & 1 &  \_ & \_ & \_ & \_  & 4. & {\em Alice} & 1 & 0 & \_ & 0 & \_ 
\end{tabular}

\end{example}

The solution to the following exercise can be found in Section~\ref{sec:games}. 

\begin{exercise} 
Find the maximum $n$ such that Alice has a winning strategy.
\end{exercise}

\noindent{\bf Binary Jumbled Pattern Matching:} The problem of deciding whether a particular pair $(x,y)$ lies within the Parikh set of a word $w$ is known as {\em binary jumbled pattern matching}. There has been much interest recently in the {\em indexed version}, where an index for the Parikh set is created in a preprocessing step, which can then be used to answer queries fast. The Parikh set can be represented in linear space due to the {\em interval property} of binary strings: If $w$ has $k$-length substrings with $x_1$ resp.\ $x_2$ ones, where $x_1< x_2$, then it also has a $k$-length substring with $y$ ones, for every $x_1\leq y \leq x_2$ (folklore). Thus the Parikh set can be represented by storing, for every $1\leq k \leq |w|$,  the minimum and maximum number of $1$s in a substring of length $k$. Much recent research has focused on how to compute these numbers efficiently~\cite{CFL09,MR10,MR12,CLWY12,BFKL13,GG13,GHLW13}. The problem has also been extended to graphs and trees~\cite{GHLW13,CGGLLRT13}, to the streaming model~\cite{LLZ12}, and to approximate indexes~\cite{CLWY12}. There is also interest in the non-binary variant~\cite{BEL04,CFL09,KRR13}. A closely related problem is that of Parikh fingerprints~\cite{AALS03}.
Applications in computational biology include SNP discovery, alignment, gene clusters, pattern discovery, and mass spectrometry data interpretation~\cite{Boecker07,Benson03,BoeckerJMS08,DuhrkopLMB13,Parida06}. 

The current best construction algorithms for the linear size index for binary jumbled pattern matching run in $O(n^2/\log n)$ time~\cite{BCFL10,MR10}, for a word $w$ of length $n$, with some improvements for special cases (compressible strings~\cite{GHLW13,BFKL13}, bit-parallel operations~\cite{MR12,GG13})\footnote{Very recently, an algorithm with running time $n^2/2^{\Omega(\log n / \log\log n)^{1/2}}$ was presented~\cite{HLRW14}.}. As we will see later, computing the prefix normal forms is equivalent to creating an index for the Parikh set of $w$. Currently, we know no faster computation algorithms for the prefix normal forms than already exist for the linear-size index. However, should better algorithms be discovered, these would immediately carry over to the problem of indexed binary jumbled pattern matching.

\medskip

\noindent{\bf Testing:}  It turns out that even {\em testing} whether a given word is prefix normal is a nontrivial task. We can of course compute $w$'s prefix normal form, in $O(n^2/\polylog\ n)$ time using one of the above algorithms: obviously $w$ is prefix normal if and only if $w = \PNF(w)$. In~\cite{pn1}, we gave a generating algorithm for prefix normal words, which exhaustively lists all prefix normal words of a fixed length. The algorithm was based on the fact that prefix normal words are a bubble language, a recently introduced class of binary languages~\cite{RSW12,Wi09}. As a subroutine of our algorithm, we gave a linear time test for words which are obtained from a prefix normal word via a certain operation. In Section~\ref{sec:testing}, we present an algorithm to test whether an {\em arbitrary} word is prefix normal, based on similar ideas. 
Our algorithm is quadratic in the worst case but we believe it performs much better than other algorithms once some simple cases have been removed.  

We further demonstrate how using several simple linear time tests can be used as a filtering step, and conjecture, based on experimental evidence, that these lead to expected $O(n)$ time algorithms. 
But first the reader is kindly invited to try for herself.

\begin{exercise}
Decide whether the word $111010100110110011$ is prefix normal. 
\end{exercise}

\noindent{\bf Enumerating:} Another very interesting and challenging problem is the enumeration of prefix normal words. It turns out that even though the number of prefix normal words grows exponentially, the fraction of these words within all binary words goes to $0$ as $n$ goes to infinity. 
In Sections~\ref{sec:enum1} to~\ref{sec:enum3}, we present both asymptotic and exact results for prefix normal words, including generating functions for special classes and counting extensions for particular words. Some of the proofs in this part of the paper are rather technical: they will be available in the full version.

\medskip

\noindent{\bf Mechanical algorithm design:} We contribute to the area of mechanical algorithm design by presenting an algorithm for computing the Parikh set which uses the new \emph{sandbeach technique}, a technique we believe will be useful in many other applications (Sec.~\ref{sec:testing}).

\medskip
We would like to point out that prefix normal words, albeit similar in name, are not to be confused with so-called {\em Abby Normal} (a.k.a.\ {\em abnormal} or {\em AB normal}), words, or rather, brains, introduced in~\cite{MelBrooks}.---
And now it is time to wish you, the reader, as much fun in reading our paper as we had in writing it!

%%%%%%%%%%%%%%%%%%%%%%%%%%%%%%%%%%%%%%%%%%%%%%%%%%%%%%%%%%%%%%%%%%%%%%%%%%%%%%%%%%%
\section{Prefix normal words}
%%%%%%%%%%%%%%%%%%%%%%%%%%%%%%%%%%%%%%%%%%%%%%%%%%%%%%%%%%%%%%%%%%%%%%%%%%%%%%%%%%%

A {\em binary word} (or {\em string}) $w=w_1\cdots w_n$ over $\Sigma=\{0,1\}$ is a finite sequence of elements from $\Sigma$. Its length $n$ is denoted by $|w|$. For any $1\leq i\leq |w|$, the $i$-th symbol of a word $w$ is denoted by $w_{i}$. 
We denote by $\Sigma^n$ the words over $\Sigma$ of length $n$, and by $\Sigma^{*} = \cup_{n\geq 0} \Sigma^n$ the set of finite words  over $\Sigma$. The empty word is denoted by $\epsilon$. 
Let $w\in \Sigma^{*}$. If $w=uv$ for some $u,v\in\Sigma^{*}$, we say that $u$ is a \emph{prefix} of $w$ and $v$ is a \emph{suffix} of $w$. A \emph{substring} of $w$ is a prefix of a suffix of $w$. A {\em binary language} is any subset $\cal L$ of $\Sigma^*$. We denote by $|w|_c$ the number of occurrences in $w$ of character $c\in\{0,1\}$; $|w|_1$ is called the {\em density} of $w$.

Let $w\in \Sigma^*$. For $i=0,\ldots,n$, we set  $P(w,i) = |w_1\cdots w_i|_1$, the number of $1$s in the $i$-length prefix of $w$, and 
$F(w,i) = \max \{|u|_1 : u \text{ is a substring of } w \text{ and } |u|=i\}$, the maximum number of $1$s over all substrings of length $i$. 

Prefix normal words, prefix normal equivalence and prefix normal form were introduced in~\cite{FL11}. A word $w \in \{0,1\}^*$ is {\em prefix normal} (w.r.t.\ $1$) if, for all $1\leq i \leq |w|$, $F(w,i) = P(w,i)$. In other words, a word is prefix normal if no substring contains more $1$s than the prefix of the same length.

\begin{example}
We give all $23$ prefix normal words of length $n=6$:\\
000000,  
100000,  
100001,  
100010,  
100100,  
101000,  
101001, 
101010,  
110000,  
110001,
110010,
110011,
110100,  
110101, 
110110,  
111000,  
111001,  
111010,  
111011,  
111100,  
111101, 
111110,  
111111.
\end{example}
\vspace{-.2cm}

Two words $w,w'$ are {\em prefix normal equivalent} (w.r.t.\ $1$) if and only if $F(w,i) = F(w',i)$ for all $i$. Given $w\in\Sigma^*$, the {\em prefix normal form} (w.r.t.\ $1$) of $w$, $\PNF(w) =\PNF_1(w)$, is the unique prefix normal word $w'$ which is prefix normal equivalent (w.r.t.\ $1$) to $w$. 
Prefix normality w.r.t.\ $0$, prefix normal equivalence w.r.t.\ $0$, and $\PNF_0(w)$ are defined analogously.  When not stated explicitly, we are referring to the functions w.r.t.\ $1$. For example, the words $0000111$ and $1110000$ are prefix normal equivalent both w.r.t. $0$ and $1$. See \cite{FL11,pn1} for more examples.

In Fig.~\ref{fig:esempio}, we see an example string $w$ and its prefix normal forms. The interval property (see Introduction) can be graphically interpreted as vertical lines. The vertical line through point $(5,6)$ represents length-$11$ substrings: the grid points within the enclosed area are $(7,4), (6,5),$ and $(5,6)$, so all length-$11$ substrings have between $7$ and $5$ ones. We can interpret, for each length $k$, the intersection of the $k$th vertical line with the top grey line as the maximum number of $1$s, and with the bottom grey line as the minimum number of $1$s. Now it is easy to see that, passing from $k$ to $k+1$, this maximum, $F_1(w,\cdot)$, can either remain the same or increase by one. This means that the top grey line allows an interpretation as a binary word. A similar interpretation applies to the bottom line and prefix normal words w.r.t 0.

It should now be clear, also graphically, that the maximum number of $1$s for a substring of length $k$, $F(w,k)$, is precisely the number of $1$s in the $k$-length prefix of $\PNF_1(w)$ 
(the upper grey line); and similarly for the maximal number of $0$s (equivalently, the minimal number of $1$s) and $\PNF_0(w,k)$ (the lower grey line).
Moreover, these values can be obtained in constant time with constant-time rank-operations~\cite{Munro96,GHLW13}.

We list a few properties of prefix normal words that will be useful later. 

\begin{lemma}[Properties of prefix normal words~\cite{FL11}]\label{lemma:0suffix}
\begin{enumerate}
\item Every prefix of a prefix normal word is also prefix normal.
\item If $w$ is prefix normal, then $w0$ is also prefix normal.
\item Given $w$ of length $n$, it can be decided in $O(n^2)$ time whether $w$ is prefix normal.
\end{enumerate}
\end{lemma}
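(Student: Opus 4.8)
The plan is to treat the three items separately, each by an elementary argument straight from the definition ``$F(w,i)=P(w,i)$ for all $i$''.

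For item~1, let $w$ be prefix normal and let $u$ be a prefix of $w$ with $|u|=m$. For every $i\le m$ the length-$i$ prefix of $u$ coincides with that of $w$, so $P(u,i)=P(w,i)$; also every length-$i$ substring of $u$ is a substring of $w$, hence $F(u,i)\le F(w,i)$. Combining this with the trivial bound $P(u,i)\le F(u,i)$ (the prefix is itself a substring) yields $P(u,i)\le F(u,i)\le F(w,i)=P(w,i)=P(u,i)$, which forces $F(u,i)=P(u,i)$. So $u$ is prefix normal.

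For item~2, write $w'=w0$ with $|w|=n$. For $i=n+1$ the only length-$(n{+}1)$ substring of $w'$ is $w'$ itself, which is its own prefix, so the equality $F(w',i)=P(w',i)$ is immediate. For $1\le i\le n$, a length-$i$ substring of $w'$ is either a substring of $w$, and so has at most $F(w,i)=P(w,i)$ ones, or it ends at the appended $0$, in which case it is $u0$ for a length-$(i{-}1)$ suffix $u$ of $w$ and thus has at most $F(w,i{-}1)=P(w,i{-}1)\le P(w,i)$ ones. Since $P(w',i)=P(w,i)$ and the prefix attains this count, $F(w',i)=P(w',i)$. The only point needing care is the bookkeeping for windows touching the new $0$ together with the boundary case $i=n+1$.

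For item~3, I would first precompute the prefix densities $P(w,i)$ for $0\le i\le n$ in $O(n)$ time by one left-to-right scan. Then, for each start position $j\in\{1,\dots,n\}$, I scan the endpoint rightward while maintaining the running count of $1$s in the current window $w_j\cdots w_{j+\ell-1}$, and at each step compare this count with $P(w,\ell)$, rejecting if it ever exceeds it. Each start position costs $O(n)$, for $O(n^2)$ in total, and $w$ is prefix normal exactly when no comparison fails, since this verifies $|u|_1\le P(w,|u|)$ for every substring $u$. No step here is a real obstacle; the mild subtlety is merely observing that checking the one-sided inequality $|u|_1\le P(w,|u|)$ over all substrings is equivalent to $w$ being prefix normal, so that $F(w,\cdot)$ need not be computed explicitly.
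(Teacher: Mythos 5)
Your three arguments are correct and complete: the sandwich $P(u,i)\le F(u,i)\le F(w,i)=P(w,i)$ for item~1, the case split on whether a window touches the appended $0$ for item~2, and the observation that the one-sided check $|u|_1\le P(w,|u|)$ over all substrings suffices (since $F(w,i)\ge P(w,i)$ always holds) for item~3 are exactly the standard proofs. The paper itself gives no proof of this lemma --- it is imported from~\cite{FL11} --- so there is no in-paper argument to compare against; your write-up simply supplies the elementary details that the citation leaves implicit.
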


We denote the language of prefix normal words  by $\LPN$,  the number of prefix normal words of length $n$ by $\pn(n)$, and the number of prefix normal words  of length $n$ and density $d$ by $\pn(n,d)$. The first few values of the sequence $\pn(n)$ are  listed in \cite{sloane2}.

%%%%%%%%%
\section{Asymptotic bounds on the number of prefix normal words}\label{sec:enum1}
%%%%%%%%%

We give lower and upper bounds on the number of prefix normal words of length $n$.
Our lower bound on $\pn(n)$ is proved in Section \ref{sec:games}.

\begin{theorem}

\label{thm:lowBound}
There exists $c>0$ such that
\begin{equation}
\pn(n) = \Omega\left(2^{n - c\sqrt{n\ln n}}\right) = \Omega\left((2-\epsilon)^n\right) \qquad\textrm{for all } \epsilon >0.
\end{equation}
\end{theorem}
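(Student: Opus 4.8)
The plan is to prove the first estimate, $\pn(n)=\Omega\bigl(2^{\,n-c\sqrt{n\ln n}}\bigr)$, by exhibiting an explicit family of prefix normal words of that size; the second estimate $\Omega((2-\epsilon)^n)$ is then immediate, since $2^{\,n-c\sqrt{n\ln n}}/(2-\epsilon)^n\to\infty$ for every fixed $\epsilon>0$. Fix an even integer $b$ (to be optimized as $b=\Theta(\sqrt{n\ln n})$) and write $n=rb+t$ with $0\le t<b$. Consider all words
\[
  w \;=\; 1^b\,B_1B_2\cdots B_{r-1}\,0^{t},
\]
where each block $B_i$ has length $b$ and contains exactly $b/2$ ones, the positions of the ones inside each $B_i$ being arbitrary. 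Distinct choices of the blocks yield distinct words, so once we know that every such $w$ is prefix normal we obtain $\pn(n)\ge\binom{b}{b/2}^{\,r-1}$.

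The heart of the argument is the claim that every such $w$ is prefix normal, i.e.\ $F(w,\ell)\le P(w,\ell)$ for all $\ell$. The trailing $0^{t}$ is harmless by Lemma~\ref{lemma:0suffix}, so assume $t=0$. If $\ell\le b$ then the length-$\ell$ prefix of $w$ is $1^{\ell}$, so $P(w,\ell)=\ell\ge F(w,\ell)$. For $\ell>b$ write $\ell=qb+s$ with $q\ge1$ and $0\le s<b$. A length-$\ell$ window that reaches into the all-ones prefix, say the last $p\le b$ symbols of $1^b$ followed by the first $\ell-p$ symbols of the balanced part, has at most $P(w,\ell)$ ones, because the length-$\ell$ prefix of $w$ is $1^b$ followed by the first $\ell-b$ symbols of the balanced part and deleting $b-p$ of those symbols removes at most $b-p$ ones. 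A window lying entirely inside $B_1\cdots B_{r-1}$ meets at most $q+2$ consecutive blocks, at most $q$ of which lie completely inside it; a complete balanced block contributes $b/2$ ones, and a short accounting of the (at most two) boundary pieces — two sub-pieces of a balanced block of total length $L$ contain at most $\min(L,b)$ ones — shows such a window has at most $qb/2+\max(b/2,s)$ ones. On the other hand the length-$\ell$ prefix contains $1^b$ together with the first $(q-1)b+s$ symbols of the balanced part, and since each $B_i$ has only $b/2$ zeros these contain at least $\tfrac{(q-1)b}{2}+\max(0,s-\tfrac b2)$ ones, whence $P(w,\ell)\ge qb/2+\max(b/2,s)$. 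Comparing the two bounds completes this step.

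For the count, $\binom{b}{b/2}\ge 2^{b}/(b+1)$ (largest of $b+1$ binomial coefficients summing to $2^b$) and $br=n-t\ge n-b$ give
\[
  \pn(n)\;\ge\;\Bigl(\tfrac{2^{b}}{b+1}\Bigr)^{r-1}\;\ge\;\frac{2^{\,n}}{2^{2b}\,(b+1)^{n/b}} .
\]
The base-$2$ logarithm of the denominator is $2b+\tfrac nb\log_2(b+1)$; taking $b=\bigl\lceil\sqrt{n\ln n}\,\bigr\rceil$ makes $2b=O(\sqrt{n\ln n})$ and $\tfrac nb\log_2(b+1)=O\!\bigl(\tfrac{n}{\sqrt{n\ln n}}\ln n\bigr)=O(\sqrt{n\ln n})$, so $\pn(n)\ge 2^{\,n-c\sqrt{n\ln n}}$ for a suitable constant $c$. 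Essentially this same construction, rephrased as ``partition the positions into blocks of length $b$ and let Alice keep every block after the first one balanced,'' is the winning-type strategy through which we present the bound in Section~\ref{sec:games}.

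I expect the main obstacle to be the prefix-normality verification in the second paragraph. The inequality $F(w,\ell)\le P(w,\ell)$ becomes an \emph{equality} for windows straddling a block boundary just after the all-ones prefix, so both the upper bound on the number of ones in a window and the lower bound on $P(w,\ell)$ must be computed exactly — the ``$+\max(b/2,s)$'' terms have to coincide on the two sides, and there is no slack to absorb a crude estimate. It is precisely this tightness that pins down the two design choices ``first block all ones'' and ``exactly $b/2$ ones in every later block'': a looser choice of blocks destroys prefix normality, while a tighter one would cost too much in the count.
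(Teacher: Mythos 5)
Your proposal is correct and takes essentially the same route as the paper: the paper proves the bound by counting the outcomes of the block game of Section~\ref{sec:games}, and those outcomes are exactly words of the form ``all-ones prefix followed by blocks each containing exactly half ones,'' counted by $\binom{2k}{k}$ per block with block length $\Theta(\sqrt{n\log n})$. The only real difference is that the paper uses a two-block all-ones prefix ($1^{4k}$ before blocks of length $2k$), which leaves slack in the inequality $|v|_1\le(p+1)k\le \frac m2+k\le 4k+(p-2)k$, whereas your one-block prefix $1^b$ makes the inequality tight and forces the exact accounting you correctly carry out.
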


If we consider the length of the first 1-run, we obtain an upper bound.
\begin{theorem}
\label{thm:upBound}
For $n\geq 1$, we have $\pn(n) = O\left(\frac{2^n (\ln n)^2}{n}\right)=o(2^n)$.
\end{theorem}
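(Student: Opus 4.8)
The plan is to partition the prefix normal words of length $n$ according to the length $j$ of their first maximal run of $1$s (i.e.\ $w$ begins with $1^j 0$, or $w = 1^n$). If $w$ is prefix normal, then by definition no substring of length $j+1$ contains more than $j$ ones; this is a genuine structural constraint that becomes stronger as $j$ grows. So I would estimate $\pn(n)$ by summing, over $j$ from $0$ to $n$, the number $N_j$ of prefix normal words whose first $1$-run has length exactly $j$. For small $j$ the constraint ``no $(j{+}1)$-window has more than $j$ ones'' is weak, but there are few such words relative to how restrictive the prefix is; for large $j$ there are potentially many words but the window constraint is very restrictive. The theorem amounts to showing the sum is dominated by an intermediate range of $j$ and is $O(2^n(\ln n)^2/n)$.

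The key steps, in order, are: (1) Fix $j$ and bound $N_j$ from above. A word counted by $N_j$ has the form $1^j 0 v$ with $|v| = n-j-1$, and the prefix normal condition forces, in particular, that every window of length $j+1$ in $1^j0v$ has at most $j$ ones — equivalently, $v$ contains no run of $j+1$ consecutive ones, and more: sliding the window shows that in $1^j0v$ every $j+1$ consecutive symbols include at least one $0$. In fact the cleanest bound uses only the ``no $1^{j+1}$ as a factor'' consequence together with a prefix bound; I would show $N_j \le 2^{n-j-1} \cdot (\text{number of binary strings of length } n-j-1 \text{ avoiding } 1^{j+1})$ is too lossy, so instead I use that every length-$(j+1)$ block of $w$ has $\le j$ ones, giving roughly $N_j = O\!\big( \binom{j+1}{\le j}^{n/(j+1)} \big)$-type bounds, i.e.\ $N_j \le 2^{(n)(1 - \Theta(1/(j\,2^j)))}$ for small $j$ and a complementary estimate $N_j \le 2^{n-j}$ coming simply from the $j$ fixed leading ones (plus the one $0$). (2) Split the sum $\sum_j N_j$ at a threshold $j_0 \approx \log_2 n - 2\log_2\log_2 n$. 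For $j \ge j_0$ use $N_j \le 2^{n-j} \le 2^{n - j_0} = O(2^n (\log n)^2 / n)$, and sum the geometric tail to get the same order. (3) For $j < j_0$ use the window bound: the number of binary strings of length $n$ in which every length-$(j+1)$ window has at most $j$ ones is at most $2^n \cdot (1 - 2^{-(j+1)})^{\lfloor n/(j+1)\rfloor}$ (the disjoint blocks $B_1, B_2, \dots$ each independently must omit being all-ones), which for $j < j_0$ is $2^n \exp(-\Omega(n / (j\, 2^j))) = 2^n \exp(-\Omega((\log n)^2))$, far smaller than the target. Adding the two parts gives the bound.

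The main obstacle is getting the estimate in step (1)/(3) tight enough on the ``small $j$'' side while keeping it simple: a crude ``forbidden factor $1^{j+1}$'' count decays only like $2^{n}(1-2^{-(j+1)})^{n/(j+1)}$ via the disjoint-block trick, and one has to check this is indeed $o(2^n(\log n)^2/n)$ for every $j < j_0$ — the binding case is $j$ just below $j_0$, where $n/(j2^j) \approx (\log n)^2$, so the decay factor is $\exp(-\Theta((\log n)^2))$, which beats $1/n$ comfortably; I would verify the constants make this work for all $n \ge 1$ (handling tiny $n$ by the trivial bound $\pn(n)\le 2^n$). A secondary subtlety is treating the word $1^n$ and boundary cases $j \in \{n-1, n\}$ separately, but these contribute $O(1)$ and are absorbed trivially.
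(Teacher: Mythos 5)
Your decomposition is the same as the paper's: classify prefix normal words by the length of the first $1$-run, observe that a short first run forbids long runs of $1$s anywhere in the word (so each of $\lfloor n/k\rfloor$ disjoint blocks must avoid being all ones), bound the long-first-run words by $2^{n-k}$, and balance the two bounds at $k\approx\log_2 n-2\log_2\log_2 n$. Your refinement of summing over each exact run length $j$ rather than taking a single two-way split changes nothing of substance; the boundary cases you worry about ($j\in\{0,n-1,n\}$) are indeed harmless.

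There is, however, one computation in step (3) that is wrong, and it sits exactly at the point where the argument is decided. For $j$ just below $j_0=\log_2 n-2\log_2\log_2 n$ you have $2^{j}\approx n/(\log_2 n)^2$, hence $n/\bigl((j+1)2^{j+1}\bigr)\approx(\log_2 n)^2/\log_2 n=\log_2 n$, not $(\log n)^2$. So the decay factor at the threshold is only $\exp(-\log_2 n)=n^{-\log_2 e}$, a fixed power of $n$ with exponent $\log_2 e\approx 1.44$, not the quasi-polynomial $\exp(-\Theta((\log n)^2))$ you claim. The conclusion survives only because $\log_2 e>1$, so that $n^{-\log_2 e}=o(1/n)$ even after multiplying by the $O(\log n)$ values of $j$ being summed; this is a fact about a constant that your sketch never verifies, and with a slightly lossier block estimate or a slightly different threshold the same reasoning would fail. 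The paper sidesteps the issue by choosing $k$ to be the largest integer with $2^k k^2\ln 2\le n$: then $n/(k2^k)\ge k\ln 2$, so the block bound $(1-2^{-k})^{\lceil n/k\rceil}\le e^{-k\ln 2}=2^{-k}$ collapses to exactly the same quantity $2^{-k}=\Theta((\ln n)^2/n)$ as the long-run case, and no constant-chasing is needed. You should either adopt that choice of threshold or carry the constant $\log_2 e$ explicitly through your step (3).
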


\begin{proof}
Let $k=k(n)>0$ be a number to be specified later. Partition $\LPN\cap \Sigma^n \setminus\{0^n\}$ into two classes according to the length of the first 1-run.
\\\emph{Case 1:} If $w$ is prefix normal and the first 1-run's length is less than $k$, then there are no $k$ consecutive $1$s in $w$. Write $w$ as the concatenation of $\lfloor n/k \rfloor$ blocks of length $k$ and a final, possibly shorter block: $w=(w_1\ldots w_k)(w_{k+1}w_{k+2}\ldots w_{2k})\ldots$ For each block we have at most $2^k-1$ possibilities, so there can be at most $(2^k-1)^{\lceil n/k \rceil}$ words in this class.
\emph{Case 2:} The length of the first $1$-run in $w$ is at least $k$. Since the first $k$ symbols of $w$ are already fixed as $1$s, there can only be $2^{n-k} = 2^n/2^k$ words in this class.

If we balance the two cases by letting $k$ be the largest integer such that $2^k\cdot k^2 \cdot\ln 2\leq n$, then we have $k=\Theta(\ln n)$ and
\begin{equation*}
\pn(n)/2^n \leq \left(1-\frac{1}{2^k}\right)^{\lceil n/k \rceil} + \frac{1}{2^k} 
= \Theta\left(\frac{k^2}{n}\right) = \Theta\left(\frac{(\ln n)^2}{n}\right) = o(1),
\end{equation*}
as stated. \hfill \qed

\end{proof}

\section{Exact formulas for special classes of prefix normal words}\label{sec:enum2}

\subsubsection{Words with fixed density.}

We formulate an equivalent definition of the prefix normal property that will be useful in the enumeration of prefix normal words. Let $w=1w_2w_3\ldots w_n$ be a prefix normal word of density $d>0$. Denote by $r_1, r_2, \ldots, r_{d-1}$ the distances between consecutive occurrences of $1$ in $w$, and set $r_d$ so that  $\sum r_j = n$ holds. We can thus write $w=10^{r_1-1}10^{r_2-1}\ldots 10^{r_d-1}$. For $w=110100010$, we have $d=4$, $r_1=1$, $r_2=2$, $r_3=4$ and $r_4=2$. 
The prefix normal property is equivalent to requiring that for all $k$, one of the shortest substrings containing exactly $k$ ones is a prefix. This gives us the following lemma.

\begin{lemma}
\label{lemma:equivDefPn}

The binary word $w$ is prefix normal if and only if the following inequalities hold:
\begin{equation*}
\begin{array}{rcll}
r_1 &\leq& r_j \qquad & j=2, 3, \ldots, d-3, d-2, d-1 \\
r_1+r_2 &\leq& r_j+r_{j+1} \qquad & j=2, 3, \ldots , d-3, d-2 \\
r_1+r_2+r_3 &\leq& r_j+r_{j+1}+r_{j+2} \qquad &j=2, 3, \ldots , d-3 \\
&\vdots&& \vdots\\
r_1+r_2+\cdots + r_{d-2} &\leq& r_j+r_{j+1}+\cdots + r_{d-1} \qquad & j=2
\end{array}
\end{equation*}
\end{lemma}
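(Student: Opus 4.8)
The plan is to translate the prefix normal property, stated in terms of the functions $F(w,\cdot)$ and $P(w,\cdot)$, into the ``gap'' representation $w=10^{r_1-1}10^{r_2-1}\cdots10^{r_d-1}$. The key observation, already hinted at in the text preceding the lemma, is that $w$ is prefix normal if and only if for every $k$ with $1\le k\le d$, some shortest substring of $w$ containing exactly $k$ ones is a prefix of $w$. First I would make this equivalence precise: the length of the shortest prefix containing $k$ ones is $r_1+r_2+\cdots+r_k$ (the position of the $k$-th one), while the length of the shortest substring containing $k$ ones, starting at the $j$-th one, is $r_j+r_{j+1}+\cdots+r_{j+k-1}$ for $1\le j\le d-k+1$. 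So the "shortest-prefix" condition says exactly
\begin{equation*}
r_1+r_2+\cdots+r_k \;\le\; r_j+r_{j+1}+\cdots+r_{j+k-1}\qquad\text{for all }1\le k\le d,\ 2\le j\le d-k+1,
\end{equation*}
which, reading off the rows $k=1,2,\ldots,d-1$ and discarding the vacuous row $k=d$, is precisely the displayed system in the lemma (note $r_d$ is the "padding" distance to the end of $w$, so the inequalities never need to extend past $r_{d-1}$ on the left because the left side is a prefix length only up to $r_1+\cdots+r_{d-1}$; I should double-check the exact index ranges match, in particular that the last row is $r_1+\cdots+r_{d-2}\le r_2+\cdots+r_{d-1}$).

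The core of the argument is then the equivalence between "$F(w,i)=P(w,i)$ for all $i$" and "for every $k$, some shortest substring with $k$ ones is a prefix." For the forward direction, suppose $w$ is prefix normal and fix $k\le d$; let $\ell$ be the length of a shortest substring containing $k$ ones. Then $F(w,\ell)\ge k$, so $P(w,\ell)\ge k$, meaning the length-$\ell$ prefix already contains at least $k$ ones; by minimality of $\ell$ it contains exactly $k$, so the prefix of length $\ell$ is itself a shortest substring with $k$ ones. For the converse, assume the shortest-substring-is-a-prefix condition and take any $i$; let $k=F(w,i)$, realized by some substring $u$ of length $i$ with $|u|_1=k$. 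A shortest substring with $k$ ones has some length $\ell\le i$, and by hypothesis the length-$\ell$ prefix contains $k$ ones; extending it to length $i$ we get $P(w,i)\ge k=F(w,i)$, and since $P(w,i)\le F(w,i)$ always, equality holds. I would handle the density-$0$ case (i.e.\ $w=0^n$) and small $d$ separately as trivial, since then the system is empty.

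I expect the main obstacle to be purely bookkeeping: getting the index ranges in the displayed system to line up exactly with the ranges of $j$ and $k$ coming out of the substring-counting argument, and in particular making sure the roles of $r_1,\ldots,r_{d-1}$ versus the terminal gap $r_d$ are consistent (the left-hand sides are prefix lengths, so they are sums $r_1+\cdots+r_m$ with $m\le d-1$, while the right-hand sides may or may not be allowed to touch $r_d$ — here they are capped at $r_{d-1}$, which I would justify by noting that a sum reaching $r_d$ would correspond to a substring extending to the very end of $w$, which is never shorter than the corresponding prefix once the earlier rows are assumed, so including those inequalities would be redundant). Everything else is immediate from the definitions of $F$ and $P$ and from the geometric picture of Figure~\ref{fig:esempio}; there is no nontrivial combinatorial content beyond the shortest-substring reformulation.
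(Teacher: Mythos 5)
Your overall strategy is exactly the one the paper intends (the paper itself only states, in the sentence before the lemma, that prefix normality is equivalent to ``for all $k$, one of the shortest substrings containing exactly $k$ ones is a prefix'' and leaves the rest implicit), and your proof of that equivalence via $F$ and $P$ is correct in both directions. The problem is in the translation into the $r_j$'s, and it is not merely the index-range bookkeeping you flagged: your length formulas are off by one in a way that changes the system. The position of the $k$-th one is $1+r_1+\cdots+r_{k-1}$, not $r_1+\cdots+r_k$, and the shortest substring with $k$ ones starting at the $j$-th one has length $1+r_j+\cdots+r_{j+k-2}$, not $r_j+\cdots+r_{j+k-1}$. (What you wrote down are the lengths of the \emph{longest} such substrings, i.e.\ of the full blocks $10^{r_j-1}\cdots 10^{r_{j+k-1}-1}$.) Because the range of $j$ is governed by the correct value of $k$ (namely $j+k-1\le d$) while your sums are those belonging to $k+1$ ones, the system you obtain is strictly stronger than the lemma's: for $k=1$ it includes $r_1\le r_d$, an inequality the lemma deliberately omits. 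This is not equivalent to prefix normality: the word $w=101$ has $d=2$, $r_1=2$, $r_2=1$, is prefix normal (the lemma's system is empty for $d=2$), yet violates $r_1\le r_2$.

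The repair is routine. With the correct lengths, the condition ``the shortest prefix with $k$ ones is no longer than the shortest substring with $k$ ones starting at the $j$-th one'' reads $1+\sum_{i=1}^{k-1}r_i\le 1+\sum_{i=j}^{j+k-2}r_i$ for $2\le j\le d-k+1$; cancelling the $1$'s and reindexing with $m=k-1$ gives $r_1+\cdots+r_m\le r_j+\cdots+r_{j+m-1}$ for $j=2,\ldots,d-m$ and $m=1,\ldots,d-2$, which is exactly the displayed system (the right-hand sums never reach $r_d$ because the substring ends at the $(j+k-1)$-th one, not at the end of its trailing $0$-run). With this correction your argument is complete and coincides with the paper's.
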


\begin{lemma}
\label{lemma:genFunc}
For $d=0,\ldots,6$, we have the generating functions $f_d(x)=\sum_{n=1}^{\infty} \pn(n,d)x^n$:
\begin{center}
\begin{minipage}{7.2cm}
\begin{eqnarray*}
f_0(x) &=&  \frac{1}{1-x}\\
f_1(x) &=&  \frac{x}{1-x}\\
f_2(x) &=&  \frac{x^2}{(1-x)^2}\\
f_3(x) &=& \frac{x^3}{(1-x^2)(1-x)^2}
\end{eqnarray*}
\end{minipage}
\begin{minipage}{7.2cm}
\begin{eqnarray*}
f_4(x) &=& \frac{x^4}{(1-x^3)(1-x)^3}\\[2mm]
f_5(x) &=&\frac{x^5(1+x+x^2)}{(1-x^4)(1-x^2)^2(1-x)^2}\\[2mm]
f_6(x) &=& \frac{x^6(1+x+x^2+x^3)}{(1-x^5)(1-x^3)(1-x^2)(1-x)^3}
\end{eqnarray*}
\end{minipage}
\end{center}
\end{lemma}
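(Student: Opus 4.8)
The plan is to count, for each fixed density $d\in\{0,1,\ldots,6\}$, the prefix normal words of length $n$ and density $d$ directly, using the composition encoding $w=10^{r_1-1}10^{r_2-1}\cdots 10^{r_d-1}$ from the discussion preceding Lemma~\ref{lemma:equivDefPn}; here $(r_1,\ldots,r_d)$ runs over all compositions of $n$ into $d$ positive parts. The cases $d=0$ and $d=1$ are immediate: the only words are $0^n$ and $10^{n-1}$ (a word with a single $1$ is prefix normal iff it begins with $1$), giving $f_0(x)=1/(1-x)$ and $f_1(x)=x/(1-x)$. For $d\geq 2$ I would feed the encoding into Lemma~\ref{lemma:equivDefPn}, so that $f_d(x)$ becomes the generating function, weighted by $\sum_i r_i$, of the set $S_d$ of positive integer tuples $(r_1,\ldots,r_d)$ satisfying the triangular inequality system stated there.

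The first step is to reduce that system to a minimal equivalent one for each small $d$. Observe first that $r_d$ occurs in no inequality (each right-hand window of length $m$ reaches only index $d-1$, and each left-hand sum uses indices $\leq d-2$), so $r_d$ is free; and for $d\leq 6$ every one of $r_2,\ldots,r_{d-1}$ is forced to be $\geq r_1$ already by the first row, while the rows of ``length $\geq 2$'' are each either implied by an earlier row or reduce to a single genuinely new inequality. Carrying this out gives: for $d=2$ no constraint; for $d=3$ just $r_1\leq r_2$; for $d=4$ just $r_1\leq r_2$ and $r_1\leq r_3$; for $d=5$ these (with $r_1\leq r_4$ also) together with $r_1+r_2\leq r_3+r_4$; and for $d=6$ the inequalities $r_1\leq r_j$ for $j=2,3,4,5$ together with $r_1+r_2\leq r_3+r_4$ and $r_1+r_2\leq r_4+r_5$.

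The second step is the change of variables $r_1=a$ with $a\geq 1$ and $r_j=a+s_j$ with $s_j\geq 0$ for $j=2,\ldots,d-1$, leaving $r_d\geq 1$ free. This makes all the ``$r_1\leq r_j$'' constraints automatic, turns each sum constraint into a homogeneous inequality in the $s_j$ (e.g.\ $r_1+r_2\leq r_3+r_4$ becomes $s_2\leq s_3+s_4$), and writes the weight as $(d-1)a+s_2+\cdots+s_{d-1}+r_d$. Hence $f_d(x)=\dfrac{x^{d-1}}{1-x^{d-1}}\cdot\dfrac{x}{1-x}\cdot G_d(x)$, where $G_d(x)$ is the generating function of the $s_j$'s subject to the residual inequalities. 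For $d\leq 4$ there are no residual inequalities, so $G_d(x)=(1-x)^{-(d-2)}$ and the product is exactly the stated $f_2,f_3,f_4$. The main obstacle is $d=5,6$: there one must evaluate $G_5(x)=\sum_{s_2\leq s_3+s_4}x^{s_2+s_3+s_4}$ and $G_6(x)=\sum_{s_2\leq s_3+s_4,\ s_2\leq s_4+s_5}x^{s_2+s_3+s_4+s_5}$. I would do this by first summing the geometric series in the variables that are bounded below (the $s_j$ on the larger side), which leaves a $\max(0,\cdot)$ in the exponent, then splitting on the sign of that quantity and summing over the remaining variables; this yields $G_5(x)=\frac{1+x+x^2}{(1-x)(1-x^2)^2}$ and $G_6(x)=\frac{1+x^2}{(1-x)^3(1-x^3)}$ after routine rational-function simplification (using $1-x^2=(1-x)(1+x)$, $(1+x)(1+x^2)=1+x+x^2+x^3$, and $\frac{1-x^4}{(1-x)(1-x^2)(1-x^3)}=\frac{1+x^2}{(1-x)(1-x^3)}$). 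Substituting into the product and simplifying reproduces the stated $f_5(x)$ and $f_6(x)$; comparing low-order coefficients against the list of length-$6$ prefix normal words given earlier is a convenient consistency check.
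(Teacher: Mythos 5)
Your proposal is correct and follows exactly the route the paper intends: the published version omits the proof (deferring it to the full version) but places Lemma~\ref{lemma:equivDefPn} immediately before this statement precisely so that the generating functions can be read off from the reduced inequality systems on the gap vector $(r_1,\dots,r_d)$, which is what you do. Your minimal systems for $d\le 6$ and the evaluations $G_5(x)=\frac{1+x+x^2}{(1-x)(1-x^2)^2}$ and $G_6(x)=\frac{1+x^2}{(1-x)^3(1-x^3)}$ all check out against the stated formulas.
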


Similar formulas can be derived for $\pn(n, n-d)$ for small values of $d$.
Unfortunately, no clear pattern is visible for $f_d(x)$ that we could use for calculating $\pn(n)$.

\subsubsection{Words with a fixed prefix.}
We now fix a prefix $w$ and give enumeration results on prefix normal words with prefix $w$. Our first result indicates that we have to consider each $w$ separately.

\begin{definition}
If $w$ is a binary word, let $\lext(w) = \{ w' : ww' \textrm{ is prefix normal } \}$, and $\lext(w, m) = \lext(w)\cap \Sigma^{|w|+m}$.
Let $\pext(w, m, d) = |\{ w' : ww' \textrm{ is prefix normal of length } |w|+m \textrm{ and density }d \}|$, and $\pext(w, m) = |\lext(w, m)|$.
\end{definition}

\begin{lemma}
\label{lemma:extLang}
Let $v, w \in 1\{0,1\}^*$ be both prefix normal. If $v\neq w$ then $\lext(v) \neq \lext(w)$.
\end{lemma}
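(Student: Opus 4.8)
The plan is to show that the extension language $\lext(w)$ determines $w$ uniquely among prefix normal words starting with $1$. Since $w$ is itself prefix normal and starts with $1$, the function $F(w,\cdot)$ carries a lot of information about $w$; in fact $w = \PNF(w)$, so $w$ is completely recoverable from the values $F(w,i) = P(w,i)$, $i = 1,\ldots,|w|$. So it suffices to argue that $\lext(w)$ determines $|w|$ and the values $F(w,i)$ for $i \le |w|$ — equivalently, that $\lext(w) = \lext(v)$ forces $F(w,\cdot) = F(v,\cdot)$ on a common domain and forces $|w| = |v|$, after which $v = \PNF(v) = \PNF(w) = w$.

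First I would recover $|w|$ from $\lext(w)$. Note that $0^m \in \lext(w)$ for every $m \ge 0$ by repeated application of Lemma~\ref{lemma:0suffix}(2), so padding with zeros is always allowed and gives no length information directly; instead I would read off $|w|$ from the \emph{shortest} words in $\lext(w)$ that are ``tight'', or more robustly, from the density profile. Concretely, for a word $w'$ of length $m$, we have $ww' \in \lext(w)$ iff for every $i$, every length-$i$ substring of $ww'$ has at most $P(w,i)$ ones (when $i \le |w|$) resp.\ at most $F(w,i)$ ones (for larger $i$, but here $F(ww',i)$ must not exceed the prefix count $P(ww',i)$, which itself depends on $w$). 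The key observation is: the single-letter word $1$ lies in $\lext(w)$ iff appending one $1$ keeps the word prefix normal, and iterating, $1^m \in \lext(w)$ iff $w1^m$ is prefix normal; the largest such $m$, together with the behaviour of $w$, pins down $P(w,|w|)$ and $|w|$. I expect a cleaner route: suppose $\lext(v) = \lext(w)$ and, WLOG, $|v| \le |w|$. Take any $w' \in \lext(w) = \lext(v)$ with $|w'| \ge |w|$; then both $ww'$ and $vw'$ are prefix normal, and since $ww'$ and $vw'$ share the suffix $w'$ of length $\ge |w| \ge |v|$, comparing their prefix-$1$-counts at positions inside $w'$ forces relations between $P(w,\cdot)$ and $P(v,\cdot)$.

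The cleanest formulation of the contradiction: if $v \ne w$ are both prefix normal starting with $1$, pick the smallest index $j$ where they differ, say $P(w,j) > P(v,j)$ (one of the two inequalities holds at the first point of difference, using $v=\PNF(v)$, $w=\PNF(w)$ and part (1) of Lemma~\ref{lemma:0suffix}). Then I would exhibit a specific extension $w'$ that is a legal suffix for exactly one of them — roughly, a block of $1$s placed so that it creates a length-$j$ substring with $P(w,j)$ ones, which is fine after $w$ (since $F(w,j) = P(w,j)$) but violates prefix normality after $v$ (since it would need $F(vw', j) \le P(v,j) < P(w,j)$). One must be careful that the ``witness'' substring of length $j$ actually appears entirely inside the appended part or straddles the boundary in a controlled way; padding with enough leading zeros in $w'$ before the block of $1$s handles the straddling issue and also equalises lengths when $|v| < |w|$.

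The main obstacle I anticipate is the boundary bookkeeping: when $|v| \neq |w|$, a suffix $w'$ that is legal for $w$ might become illegal for $v$ not because of $v$'s internal structure but merely because $vw'$ is shorter, shifting which prefix each substring is compared against. Handling this requires choosing $w'$ long enough (and with enough leading zeros) that the discriminating substring of length $j$ sits well past position $\max(|v|,|w|)$, so that the comparison is genuinely against $F(\cdot, j) = P(\cdot, j)$ in both cases and the difference $P(w,j) > P(v,j)$ does the work. Once the witness is correctly localised, the rest is the routine check that the constructed $w'$ lies in $\lext(w) \setminus \lext(v)$, contradicting $\lext(v) = \lext(w)$.
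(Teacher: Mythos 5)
Your overall strategy (locate the first discrepancy between $P(v,\cdot)$ and $P(w,\cdot)$, then build a distinguishing extension consisting of a long run of $0$s followed by a ``dense'' window of length $j$) is the right shape, but the witness you propose does not work. If the appended part contains a solid block of $1$s realising $P(w,j)$ ones in a window of length $j$, then $wu$ itself is typically \emph{not} prefix normal: the block creates a substring of length $P(w,j)$ with $P(w,j)$ ones, which violates prefix normality at that shorter length unless $w$ happens to begin with $1^{P(w,j)}$. Concretely, take $w=1010$, $v=1000$, $j=3$, $P(w,3)=2$: your extension $0^N110\cdots$ gives $F(wu,2)=2>1=P(wu,2)$, so $u\notin\lext(w)$ and nothing is separated. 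The fix is that the discriminating window must reproduce the $1$-distribution of the prefix of $w$, not just its $1$-count: taking $u=0^N w_1\cdots w_j$ with $N\geq\max(|v|,|w|)$ one can check, using $|w_p\cdots w_l|_1\leq F(w,l-p+1)=P(w,l-p+1)$, that every window of $wu$ is dominated by the corresponding prefix count, so $u\in\lext(w)$, while $vu$ contains a length-$j$ window with $P(w,j)>P(v,j)=P(vu,j)$ ones, so $u\notin\lext(v)$. This repair is routine but it is not the argument you wrote down.

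A second, independent gap is the case where $v$ and $w$ differ only by trailing zeros, say $w=v0^t$ with $t\geq 1$ (e.g.\ $v=10$, $w=100$; both are prefix normal and start with $1$, so the lemma must cover this). Here there is \emph{no} index $j$ with $P(w,j)>P(v,j)$ (the zero-extended prefix-count functions coincide), your ``equalise lengths by padding with zeros'' step literally turns $v$ into $w$, and the corrected witness $0^N w_1\cdots w_j$ above lies in both extension languages. A separate construction is needed, exploiting that the $1$s of a common extension $u$ enter the prefix count $t$ positions later after $w$ than after $v$; for instance $u=101$ satisfies $10u=10101\in\LPN$ but $100u=100101\notin\LPN$ (the substring $101$ has two $1$s against a length-$3$ prefix with one $1$). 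Your proposal as written produces no witness in this case, so the proof is incomplete even after repairing the block-of-$1$s issue. (For reference, the paper defers its own proof of this lemma to the full version, so I am judging your argument only on its internal correctness.)
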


We were unable to prove that the growth of these two extension languages also differ.
\begin{conjecture}
Let $v, w \in 1\{0,1\}^*$ be both prefix normal. If $v\neq w$ then the infinite sequences $(\pext(v, m))_{m\geq1}$ and $(\pext(w,m))_{m\geq 1}$ are different.
\end{conjecture}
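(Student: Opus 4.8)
The plan is to recast the conjecture as a statement about level sizes in the \emph{extension tree} of a prefix normal word, and then to try to recover $u$ from its counting sequence. Fix a prefix normal $u\in 1\{0,1\}^*$ and let $T_u$ be the rooted binary tree whose depth-$m$ nodes are exactly the words $u'\in\Sigma^m$ with $uu'$ prefix normal, where a child is formed by appending a symbol. By Lemma~\ref{lemma:0suffix}(2) every node has an ``append $0$'' child, whereas it has an ``append $1$'' child only conditionally: a short computation shows that, writing $s_j$ for the number of $1$s in the length-$j$ suffix of $uu'$, a node admits an append-$1$ child iff $s_j < P(uu',j)$ for every $j$ such that the $(j+1)$-th symbol of $uu'$ is $0$. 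Thus $\pext(u,m)$ is precisely the level-size profile of $T_u$, and the conjecture is the injectivity of the map $u\mapsto(\pext(u,m))_{m\ge1}$, equivalently of $u\mapsto E_u(x):=\sum_{m\ge0}\pext(u,m)x^m$.

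A first useful reduction is that, since appending $0$ is always legal, $\pext(u,m)$ is nondecreasing and
\[ \pext(u,m+1)=\pext(u,m)+b_m, \qquad b_m:=|\{u'\in\Sigma^m: uu'1\text{ is prefix normal}\}|. \]
Hence the auxiliary sequence $(b_m)$ of ``$1$-extendable'' counts is itself recoverable from $(\pext(u,m))$. My main line of attack would be a reconstruction argument: the legality of all future appends depends on $u$ only through its \emph{state}, the pair consisting of the prefix profile $P(u,\cdot)$ and the suffix profile $(s_j)_j$, and this state evolves deterministically as one descends $T_u$, so $(\pext(u,m))$ is the path-count of a weighted automaton on states. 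I would try to show that an initial segment of the sequence already pins down the state and hence, since $u$ is prefix normal and so equals its own $\PNF(u)$, the word $u$ itself. Lemma~\ref{lemma:extLang} guarantees that distinct $u$ reach distinct states, which is the structural fact any such reconstruction must exploit.

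An alternative, more direct route starts from Lemma~\ref{lemma:extLang}: for $v\neq w$ there is a least length $m_0$ and a word $y$ lying in exactly one of $\lext(v,m_0),\lext(w,m_0)$. The hope would be to promote this \emph{set} difference into a \emph{count} difference, either by exhibiting at level $m_0$ (or a controlled later level) an injection from one symmetric-difference class into the complementary tree, yielding a strict inequality $\pext(v,m)\neq\pext(w,m)$, or by finding a partial order on prefix normal words under which the word-to-tree map is strictly order-reversing on comparable pairs.

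The hard part, and the reason this is stated only as a conjecture, is exactly this upgrade from a set-level distinction to a count-level one. Two non-isomorphic trees can share the same level-size profile, so $\lext(v)\neq\lext(w)$ does not by itself force the profiles apart; one genuinely needs the fine combinatorics of the append-$1$ condition. Unfortunately that condition, ``$s_j<P(\cdot,j)$ wherever the prefix bit is $0$'', is not monotone under either the lexicographic or the bitwise-dominance order on words, so the extension trees of incomparable $v,w$ can interleave level by level and no obvious dominance is available. I would therefore expect an eventual proof to hinge on a reconstruction invariant, for instance reading off $u$ from the characteristic data of the state automaton, or from the fixed-density refinements whose generating functions appear in Lemma~\ref{lemma:genFunc}, rather than on a soft counting argument; identifying the right invariant is the crux.
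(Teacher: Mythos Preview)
The statement you are addressing is explicitly a \emph{conjecture} in the paper; the authors write just before it that they ``were unable to prove that the growth of these two extension languages also differ.'' There is therefore no proof in the paper to compare your attempt against.

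Your write-up is not a proof either, and to your credit you essentially say so. You set up a clean framework (the extension tree $T_u$, the level-size profile interpretation of $\pext$, the append-$1$ criterion in terms of suffix and prefix $1$-counts), sketch two plausible lines of attack (state-automaton reconstruction; upgrading the set-level distinction of Lemma~\ref{lemma:extLang} to a count-level one), and then correctly identify the obstruction---that distinct trees can share level profiles, and that the append-$1$ condition is not monotone under any obvious order on words. As a research outline this is reasonable, and your technical claims (e.g.\ the characterisation of when $uu'1$ is prefix normal in terms of positions where the prefix bit is $0$) check out.

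But as a proof it has a genuine gap: nothing you write actually establishes the conjecture. Both proposed routes stop precisely where the real work would begin. The reconstruction argument needs a concrete invariant of the counting sequence that determines the state, and you do not supply one; the injection argument needs an explicit map between the symmetric-difference classes, and you do not construct one. Since the paper itself leaves the statement open, this is not a criticism of your reasoning so much as an observation that you have arrived at the same position as the authors: the conjecture remains unproved.
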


The values $\pext(w, m, d)$ seem hard to analyze. We give exact formulas for a few special cases of interest. Using Lemma \ref{lemma:equivDefPn}, it is possible to give formulas similar to those in Lemma \ref{lemma:genFunc} for $\pext(w, m, d)$ for fixed $w$ and $d$. We only mention one such result. 

\begin{lemma} 
For $1\leq d\leq n$ we have $\pext(10, n+d-3, d)   = \pn(n, d)$.
\end{lemma}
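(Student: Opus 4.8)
The plan is to exhibit a density-preserving bijection between prefix normal words of length $n$ and density $d$ on one side, and on the other side the words $w'$ of length $n+d-3$ and density $d$ such that $10w'$ is prefix normal of total length $n+d-1$ and density $d$. Note that $\pext(10,n+d-3,d)$ counts exactly the latter set, so a bijection gives the claimed equality $\pext(10,n+d-3,d)=\pn(n,d)$.

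First I would set up the gap encoding from Lemma~\ref{lemma:equivDefPn}. A prefix normal word $u$ of length $n$ and density $d$ is written $u=10^{r_1-1}10^{r_2-1}\cdots 10^{r_d-1}$ with $\sum_{j=1}^d r_j=n$, and prefix normality is precisely the system of prefix-sum inequalities $\sum_{i=1}^{t}r_i\le\sum_{i=j}^{j+t-1}r_i$ for all valid $t,j$. Now consider a word of the form $10w'$ that is prefix normal, has density $d$, and length $n+d-1$: its gap sequence is $(s_1,s_2,\ldots,s_d)$ with $s_1\ge 2$ (because of the forced $10$ prefix, the first $1$ is followed by at least one $0$) and $\sum s_j = n+d-1$. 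The key observation is that since $s_1$ is the smallest gap (by the first row of inequalities, $s_1\le s_j$ for all $j$), every gap satisfies $s_j\ge 2$. So I would define the map $\varphi$ sending $(s_1,\ldots,s_d)\mapsto(s_1-1,s_2-1,\ldots,s_d-1)=:(r_1,\ldots,r_d)$. Then each $r_j\ge 1$, and $\sum r_j = (n+d-1)-d = n-1$; prepending a leading $1$-gap is already accounted for, so $(r_1,\ldots,r_d)$ is the gap sequence of a word $u$ of length... wait: $\sum r_j = n-1$, not $n$. I would instead take $u$ to have gap sequence $(r_1,\ldots,r_d)$ with the convention $\sum r_j = n$; so the correct normalization is to observe $10w'$ has length $|w'|+2 = n+d-1$, hence $|w'|=n+d-3$, and its gap sequence sums to $n+d-1$, and subtracting $1$ from each of the $d$ gaps yields a sequence summing to $n-1$, which is the gap sequence of a prefix normal word of length $n-1+1 = n$ once we recall that in the encoding $w=10^{r_1-1}\cdots$ the total length equals $\sum r_j$. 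I would double check this arithmetic carefully, as the shift by a constant in the length is exactly the content of the ``$n+d-3$'' in the statement.

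The heart of the argument is that $\varphi$ preserves the prefix normal inequality system. This is immediate and is the cleanest step: for any window parameters $t$ and starting index $j$, $\sum_{i=1}^{t}(s_i-1) = \big(\sum_{i=1}^t s_i\big)-t$ and $\sum_{i=j}^{j+t-1}(s_i-1)=\big(\sum_{i=j}^{j+t-1}s_i\big)-t$, so subtracting $1$ from every gap shifts both sides of every inequality by the same amount $-t$ and hence preserves all inequalities in both directions. Therefore $(s_1,\ldots,s_d)$ satisfies Lemma~\ref{lemma:equivDefPn}'s system if and only if $(r_1,\ldots,r_d)$ does. Conversely, given any prefix normal $u$ of length $n$ and density $d$ with gap sequence $(r_1,\ldots,r_d)$, adding $1$ to each gap gives $(r_1+1,\ldots,r_d+1)$, which still satisfies the system, has first gap $\ge 2$ (so the corresponding word begins $10$), sums to $n+d$... and here again I must be careful: $\sum(r_j+1) = n+d$, giving a word of length $n+d$, i.e. $10w'$ with $|w'| = n+d-2$, off by one from $n+d-3$. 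The resolution — and the point I expect to be the main obstacle to get exactly right — is bookkeeping of the boundary gap $r_d$ (the trailing run of $0$s), which in the encoding of Lemma~\ref{lemma:equivDefPn} is treated asymmetrically from $r_1,\ldots,r_{d-1}$: the inequalities only involve windows ending at index $d-1$, so $r_d$ is unconstrained above. I would therefore define $\varphi$ to add $1$ to the first $d-1$ gaps only (or equivalently handle the final run separately), track the length change precisely, and verify that the inequalities — which never reference $s_d$ on the right-hand side in a way that the uniform shift breaks — are still preserved. Once the off-by-one is pinned down, injectivity and surjectivity of $\varphi$ are obvious since it is a coordinatewise affine shift on gap vectors, and the density $d$ (the number of gaps) is manifestly preserved. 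This yields $\pext(10,n+d-3,d)=\pn(n,d)$.
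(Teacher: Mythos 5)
Your proposal is correct and, once you settle on the final version of the map (add $1$ to each of the first $d-1$ gaps, leaving the trailing run of $0$s alone), it coincides exactly with the paper's bijection, which is phrased as ``insert a $0$ before each subsequent occurrence of $1$'' in a prefix normal word of length $n$ and density $d$. The exploratory off-by-one detours could be cut, but the key points you identify --- the uniform shift preserves the inequality system of Lemma~\ref{lemma:equivDefPn}, the first row of inequalities guarantees $s_j\ge 2$ for $j\le d-1$ so the inverse is well defined, and $r_d$ is the asymmetric coordinate --- are precisely what makes the paper's one-line argument work.
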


\begin{proof}
Let $w$ be an arbitrary prefix normal word of length $n$ and density $d$ with $1$ as its first symbol. Insert a $0$ before each subsequent occurrence of $1$. It is easy to see that this operation creates a bijection between the two sets that we want to enumerate.
\hfill \qed
\end{proof}

The following lemma lists exact values for $\pext(w, |w|)$ for some infinite families of words $w$.

\begin{lemma}
Let $F(n)$ denote the $n$th Fibonacci number: $F(1)=F(2)=1$ and $F(n+2) = F(n+1)+F(n)$. Then for all values of $n$ where the exponents are nonnegative, we have the following formulas:
\begin{center}
\begin{minipage}{7.2cm}
\begin{eqnarray*}
&&\pext(0^n,n) = 1\\[1mm]
&& \pext(1^n, n) = 2^n \\[1mm]
&& \pext(1^{n-1}0, n) = 2^n - 1 \\[1mm]
&& \pext(1^{n-2}01,n) = 2^n - 5 \\[1mm]
&& \pext(1^{n-2}00,n) = 2^n - (n+1)
\end{eqnarray*}
\end{minipage}
\begin{minipage}{7.2cm}
\begin{eqnarray*}
&&\pext((10)^{\frac n2}, n) = F(n+2) \textrm { if }n \textrm{ is even }\\[2mm]
&&\pext((10)^{\lfloor \frac n2\rfloor}1,n) = F(n+1) \textrm{ if }n\textrm{ is odd}\\[2mm]
&&\pext(10^{n-2}1,n) = 3\\[2mm]
&&\pext(10^{n-1},n) = n+1
\end{eqnarray*}
\end{minipage}
\end{center}

\end{lemma}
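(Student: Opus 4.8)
The plan is to handle the eight identities in two groups according to the shape of $w$: a \emph{dense} group ($0^n$, $1^n$, $1^{n-1}0$, $1^{n-2}01$, $1^{n-2}00$) and a \emph{sparse} group ($(10)^{n/2}$, $(10)^{\lfloor n/2\rfloor}1$, $10^{n-2}1$, $10^{n-1}$). In both groups the strategy is the same --- determine \emph{exactly} which $w'\in\Sigma^n$ make $ww'$ prefix normal and then count them --- but the tools differ: for the dense group I would argue straight from the definition, and for the sparse group I would use the gap characterisation of Lemma~\ref{lemma:equivDefPn}. (Throughout, recall that a prefix of a prefix normal word is prefix normal, so $\pext(w,n)=0$ unless $w$ is prefix normal, which for $1^{n-2}01$ already forces $n\ge 3$.)

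For the dense group, note that the length-$i$ prefix of $ww'$ is $1^i$ for all $i\le n-2$, and has all but at most two of its symbols equal to $1$ for $i\in\{n-1,n\}$; hence the prefix can only be ``beaten'' for $i$ in a bounded window around $n$, and for $i>n$ an elementary estimate from the fact that $P(\cdot,\cdot)$ increases by at most one per step shows that every substring crossing the boundary between $w$ and $w'$ is dominated by the prefix. This immediately gives $\pext(0^n,n)=1$ (the length-$1$ prefix is $0$, so $w'=0^n$) and $\pext(1^n,n)=2^n$ (the prefix is maximal at every length $\le n$). For $w=1^{n-1}0$ the analysis shows $ww'$ is prefix normal iff it has no factor $1^n$, which --- since position $n$ of $w$ carries a $0$ --- fails only for $w'=1^n$, so $\pext(1^{n-1}0,n)=2^n-1$. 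For $w=1^{n-2}00$ the conditions collapse to ``$|w'|_0\ge 2$'' (no $1^{n-1}$ factor, and every length-$n$ window carries at least two $0$s), hence $\pext(1^{n-2}00,n)=2^n-\binom n0-\binom n1=2^n-(n+1)$. For $w=1^{n-2}01$ the conditions say that $1^{n-2}01w'$ has no factor $1^{n-1}$; classifying where a run of $n-1$ ones could sit shows this excludes precisely the five words $1^n$, $1^{n-1}0$, $01^{n-1}$, $1^{n-2}00$, $1^{n-2}01$, so $\pext(1^{n-2}01,n)=2^n-5$.

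For the sparse group I would invoke Lemma~\ref{lemma:equivDefPn}. The crucial feature is that in every such $w$ the gaps between consecutive $1$s \emph{inside} $w$ are all equal to the minimum gap $r_1$ (namely $2$ for the alternating prefixes; there is a single inside-gap $n-1$ for $10^{n-2}1$ and none for $10^{n-1}$). This makes the higher-order inequalities of Lemma~\ref{lemma:equivDefPn} redundant: if every gap $r_j$ satisfies $r_j\ge r_1$, then an elementary estimate shows $r_j+\cdots+r_{j+\ell-1}\ge r_1+\cdots+r_\ell$ for every window, so prefix normality of $ww'$ is equivalent to the single first-order requirement $r_1\le r_j$ for every gap $r_j$. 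Translating back: for $(10)^{n/2}$ this says $w'$ has no factor $11$ (the junction is harmless because $w$ ends in $0$), so the count is the number of binary strings of length $n$ avoiding $11$, namely $F(n+2)$; for $(10)^{\lfloor n/2\rfloor}1$ the $1$ at the end of $w$ additionally forces $w'$ to begin with $0$, reducing the count to the $F(n+1)$ strings of length $n-1$ avoiding $11$; for $10^{n-2}1$ the inside-gap $r_1=n-1$ leaves room for at most one further $1$, at distance $\ge n-1$, which gives exactly $0^n$, $0^{n-1}1$, $0^{n-2}10$, i.e.\ $3$; and for $10^{n-1}$ the first gap is already so long that $w'$ can contain at most one $1$ (in any of its $n$ positions), giving the $n+1$ words $0^n$ and $0^k10^{n-1-k}$.

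The step I expect to be the main obstacle is the bookkeeping in the dense group, especially for $1^{n-2}01$ and $1^{n-2}00$: one must check that the short list of constraints (lengths $n-1$, $n$, and for $1^{n-2}01$ also $n+1$) is genuinely complete, i.e.\ that substrings of every larger length are automatically dominated by the prefix. The increment property of $P$ handles all of these except the windows anchored at the $0$ in position $n-1$ or the $1$ in position $n$ of $w$, where the naive bound is off by one and has to be closed using the structural facts already derived (no long run of $1$s in $w'$, respectively $|w'|_0\ge 2$); pinning down this interplay, and verifying that the five excluded words for $1^{n-2}01$ are pairwise distinct for all admissible $n$, is where the care lies. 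In the sparse group the only comparable subtlety is justifying that the first-order condition really subsumes the higher-order ones, including the window sums that straddle $w$ and $w'$ and the fact that the trailing block $r_d$ is unconstrained.
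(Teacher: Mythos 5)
Your proposal is correct and follows essentially the same route as the paper: for the dense words you count the failing extensions (e.g.\ the single word $1^n$ for $1^{n-1}0$, the five listed words for $1^{n-2}01$, the $n+1$ words with at most one $0$ for $1^{n-2}00$), for the sparse words you count the succeeding extensions directly, and for the alternating words you reduce to counting length-$n$ strings avoiding the factor $11$, which is $F(n+2)$ --- exactly the characterizations the paper uses, only worked out in much more detail (the paper's own proof is a few lines). Your extra care about the boundary windows and about invoking Lemma~\ref{lemma:equivDefPn} for the sparse group is sound and fills in steps the paper leaves implicit.
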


\begin{proof}
For $w=1^n$, $w=1^{n-1}0$, $w=1^{n-2}01$ and $w=1^{n-2}00$, it is easy to count those extensions that fail to give prefix normal words. Similarly, for $w=10^{n-2}1$, $w=10^{n-1}$ and $w=0^n$, counting the extensions that give prefix normal words gives the results in a straightforward way.

Let $n$ be even. For $w=(10)^{\frac n2}$, note that $ww'$ is prefix normal if and only if $w'$ avoids $11$. The number of such words is known to equal $F(n+2)$. For $n$ odd, the argument is similar.
\hfill \qed
\end{proof}

\section{Experimental results about prefix normal words}\label{sec:enum3}

We consider extensions of prefix normal words by a single symbol to the right. It turns out that this question has implications for the enumeration of prefix normal words. 

\begin{definition}
We call a prefix normal word $w$ extension-critical if $w1$ is not prefix normal. Let $\crit(n)$ denote the number of extension-critical words in $\LPN\cap \Sigma^n$.
\end{definition}

\begin{lemma} For $n\leq 1$ we have
\begin{equation}
\pn(n) = 2\pn(n-1) - \crit(n-1) = \pn(n-1)\left(2-\frac{\crit(n-1)}{\pn(n-1)}\right).
\end{equation}
From this it follows that
\begin{equation}
\label{eq:critProduct}
\pn(n) = 2\prod_{i=1}^{n-1} \left(2-\frac{\crit(i)}{\pn(i)}\right).
\end{equation}
\end{lemma}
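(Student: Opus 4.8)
The plan is to establish the one-step recurrence $\pn(n) = 2\pn(n-1) - \crit(n-1)$ first, and then derive the product formula \eqref{eq:critProduct} by a routine induction (telescoping the recurrence).

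\textbf{Step 1: the recurrence.} Every word $u \in \LPN \cap \Sigma^{n}$ can be written uniquely as $u = wb$ with $w \in \Sigma^{n-1}$ and $b \in \{0,1\}$. By Lemma~\ref{lemma:0suffix}(1), the prefix $w$ of the prefix normal word $u$ is itself prefix normal, so $w \in \LPN \cap \Sigma^{n-1}$. Conversely, I would argue that each $w \in \LPN \cap \Sigma^{n-1}$ contributes either one or two elements to $\LPN \cap \Sigma^{n}$: by Lemma~\ref{lemma:0suffix}(2), $w0$ is always prefix normal, so $w$ contributes at least one; and $w1$ is prefix normal exactly when $w$ is \emph{not} extension-critical, by definition of $\crit$. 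Hence the map $u \mapsto (\text{first } n-1 \text{ symbols of } u)$ from $\LPN \cap \Sigma^{n}$ onto $\LPN \cap \Sigma^{n-1}$ has fibres of size $2$ over non-extension-critical words and size $1$ over extension-critical words, which gives
\begin{equation*}
\pn(n) = 2\bigl(\pn(n-1) - \crit(n-1)\bigr) + \crit(n-1) = 2\pn(n-1) - \crit(n-1),
\end{equation*}
and factoring out $\pn(n-1)$ yields the displayed form. (I note the hypothesis should read $n \geq 2$, or $n \geq 1$ with the convention $\pn(0)=1$, $\crit(0)=0$; the stated ``$n \leq 1$'' is evidently a typo, and I would flag this.)

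\textbf{Step 2: the product formula.} From $\pn(n) = \pn(n-1)\bigl(2 - \crit(n-1)/\pn(n-1)\bigr)$, I unroll the recursion down to the base case. Iterating gives $\pn(n) = \pn(1)\prod_{i=1}^{n-1}\bigl(2 - \crit(i)/\pn(i)\bigr)$, and since $\pn(1) = 2$ (the words $0$ and $1$ are both prefix normal), this is exactly \eqref{eq:critProduct}. This step is a one-line telescoping induction on $n$, with the inductive step being a single application of the recurrence from Step~1.

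\textbf{Main obstacle.} There is no real difficulty here; the only point requiring a little care is the bijective bookkeeping in Step~1, namely verifying that the fibre over each $w \in \LPN \cap \Sigma^{n-1}$ is determined precisely by whether $w \in \crit$, which rests squarely on parts (1) and (2) of Lemma~\ref{lemma:0suffix} and the definition of extension-critical. Everything else is elementary arithmetic. I would also double-check the index range and base case so that the product in \eqref{eq:critProduct} starts and ends correctly.
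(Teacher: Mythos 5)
Your proof is correct and is exactly the argument the paper intends (the paper states this lemma without proof, treating it as immediate from Lemma~\ref{lemma:0suffix} parts (1) and (2) and the definition of extension-critical): partition $\LPN\cap\Sigma^{n}$ by the last symbol, note every $w\in\LPN\cap\Sigma^{n-1}$ yields $w0\in\LPN$ and yields $w1\in\LPN$ precisely when $w$ is not extension-critical, and telescope. You are also right that ``$n\leq 1$'' is a typo for $n\geq 1$ (with $\pn(0)=1$, $\crit(0)=0$) or $n\geq 2$, and that the base value $\pn(1)=2$ is what produces the leading factor $2$ in the product formula.
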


From Theorem \ref{thm:lowBound} we have:
\begin{lemma} For $n$ going to infinity, $\lim\inf \crit(n)/\pn(n)  =  0$.
\end{lemma}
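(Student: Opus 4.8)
The plan is to derive the statement directly from the product formula \eqref{eq:critProduct} together with the lower bound of Theorem~\ref{thm:lowBound}. The key observation is that since $\crit(i)/\pn(i) \in [0,1]$ for every $i$, each factor $2 - \crit(i)/\pn(i)$ lies in $[1,2]$, so the partial products are nondecreasing and $\pn(n)$ grows (weakly) monotonically in a controlled way. If the quantity $\liminf_n \crit(n)/\pn(n)$ were equal to some $\delta > 0$, then for all sufficiently large $i$, say $i \geq N$, we would have $\crit(i)/\pn(i) \geq \delta/2$, hence $2 - \crit(i)/\pn(i) \leq 2 - \delta/2$.

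From this, for $n > N$ the product in \eqref{eq:critProduct} splits as
\begin{equation*}
\pn(n) = 2 \prod_{i=1}^{N-1}\left(2 - \frac{\crit(i)}{\pn(i)}\right) \cdot \prod_{i=N}^{n-1}\left(2 - \frac{\crit(i)}{\pn(i)}\right) \leq C_N \cdot \left(2 - \frac{\delta}{2}\right)^{n-N},
\end{equation*}
where $C_N$ is a constant depending only on $N$ (and hence on $\delta$). Thus $\pn(n) = O\big((2 - \delta/2)^n\big)$, i.e.\ $\pn(n)$ is bounded above by $(2-\epsilon)^n$ for $\epsilon = \delta/2 > 0$ and all large $n$. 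But Theorem~\ref{thm:lowBound} asserts $\pn(n) = \Omega\big((2-\epsilon)^n\big)$ for \emph{every} $\epsilon > 0$, in particular for this $\epsilon = \delta/2$. These two statements are contradictory: an $\Omega\big((2-\epsilon)^n\big)$ lower bound together with an $O\big((2-\epsilon)^n\big)$ upper bound with the \emph{same} $\epsilon$ is not yet a contradiction, so I would instead pick $\epsilon' = \delta/4$ and note that $\pn(n) = \Omega\big((2-\delta/4)^n\big)$ forces $\pn(n)/(2-\delta/2)^n \to \infty$, contradicting the $O$-bound above. Therefore $\delta = 0$, which is exactly the claim $\liminf_n \crit(n)/\pn(n) = 0$.

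I expect the only subtlety — and the step to state carefully rather than a genuine obstacle — is keeping the quantifiers straight: one must choose the exponent in the lower bound strictly smaller than the effective base $2 - \delta/2$ coming from the upper estimate, so that the ratio genuinely diverges. Everything else is routine: the factors are in $[1,2]$ by definition of $\crit$, the finite prefix of the product contributes only a constant, and $\liminf$ by definition gives an eventual lower bound on the tail once we assume it is positive. No new combinatorial input about prefix normal words is needed beyond Theorem~\ref{thm:lowBound} and the recurrence already established.
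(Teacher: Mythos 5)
Your argument is correct and is exactly the route the paper intends: the lemma is stated with only the remark ``From Theorem~\ref{thm:lowBound} we have,'' and the intended derivation is precisely your contradiction via the product formula \eqref{eq:critProduct} --- a positive $\liminf$ would cap $\pn(n)$ at $O\left((2-\delta/2)^n\right)$, contradicting the $\Omega\left((2-\epsilon)^n\right)$ lower bound for a suitably smaller $\epsilon$. Your care with the quantifiers (choosing $\epsilon = \delta/4$ so the ratio genuinely diverges) is the right way to close the gap you flagged mid-proof, so nothing further is needed.
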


We conjecture that in fact the ratio of extension-critical words converges to $0$. We study the behavior of $\crit(n)/p(n)$ for $n\leq 49$. The left plot in Fig.~\ref{figCrit} shows the ratio of extension-critical words for $n\leq 49$. These data support the conjecture that the ratio tends to $0$. Interestingly, the values decrease monotonically for both odd and even values, but we have $\crit(n+1)/\pn(n+1) > \crit(n)/\pn(n)$ for even $n$. We were unable to find an explanation for this.

The right plot in Fig.~\ref{figCrit} shows the ratio of extension-critical words multiplied by $n/\ln n$. Apart from a few initial data points, the values for even $n$ increase monotonically and the values for odd $n$ decrease monotonically, and the values for odd $n$ stay above those for even $n$. 

\begin{conjecture} 
Based on empirical evidence, we conjecture the following:
\begin{eqnarray}
\crit(n) &=& \pn(n) \Theta(\ln n / n) ,\\
\pn(n) &=& 2^{n-\Theta((\ln n)^2)} .
\end{eqnarray}
\end{conjecture}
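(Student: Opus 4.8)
The plan is to reduce both displayed equations to a single estimate, $\crit(n)=\Theta(\pn(n)\ln n/n)$, and then to attack that estimate through a structural characterisation of extension-critical words. The reduction of the second equation to the first is routine arithmetic on the product formula~\eqref{eq:critProduct}: it gives $\pn(n)/2^{n}=\prod_{i=1}^{n-1}\bigl(1-\crit(i)/(2\pn(i))\bigr)$, and since $0\le\crit(i)\le\pn(i)$ one has $-\ln\bigl(1-\crit(i)/(2\pn(i))\bigr)=\Theta\bigl(\crit(i)/\pn(i)\bigr)$ with absolute constants, whence $\ln(2^{n}/\pn(n))=\Theta\bigl(\sum_{i=1}^{n-1}\crit(i)/\pn(i)\bigr)$; inserting $\crit(i)/\pn(i)=\Theta(\ln i/i)$ and $\sum_{i\le N}\ln i/i=\tfrac12(\ln N)^{2}+O(\ln N)$ then yields $\pn(n)=2^{n-\Theta((\ln n)^{2})}$ (finitely many small values of $i$ are irrelevant). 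Equivalently, since $\crit(n)=2\pn(n)-\pn(n+1)$, the whole statement is the claim that $\pn(n+1)/\pn(n)=2-\Theta(\ln n/n)$.

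The characterisation I would use is the following. Because $w$ is prefix normal, the only new factors of $w1$ are its suffixes, and a short computation gives: $w1$ is prefix normal iff for every $m\in\{0,1,\dots,n-1\}$ the length-$m$ suffix of $w$ contains at most $P(w,m)+w_{m+1}-1$ ones (the case $m=0$ just forces $w_{1}=1$). Since $F(w,m)=P(w,m)$, this can fail only at an index $m$ with $w_{m+1}=0$ at which the length-$m$ suffix is \emph{tight}, i.e.\ has exactly $P(w,m)$ ones, equivalently $P(w,m)+P(w,n-m)=|w|_{1}$. Hence a prefix normal word with $w_1=1$ is extension-critical iff it has such a dangerous index $m$ (and $0^{n}$ is extension-critical too); taking $m$ equal to the leading $1$-run length already shows that \emph{every} prefix normal word whose trailing $1$-run is at least as long as its leading $1$-run is extension-critical. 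Estimating $\crit(n)$ thus becomes the problem of counting prefix normal words with, respectively, no dangerous index and with at least one.

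For the upper bound $\crit(n)=O(\pn(n)\ln n/n)$ one would show that a uniformly random prefix normal word is unlikely to have a dangerous index: first, for almost all such words the lengths at which a tight suffix can occur lie in an $O(\ln n)$-size window (a prefix normal word without a run of $\ell$ ones has no tight suffix longer than about $\ell$, and almost all prefix normal words have no run of more than $\log_2 n+O(\ln\ln n)$ ones), leaving only $O(\ln n)$ candidate indices; and second, for each fixed candidate $m$ the joint event ``$w_{m+1}=0$ and the length-$m$ suffix of $w$ is tight'' should hold for only an $O(1/n)$ fraction of prefix normal words, since it pins down the density profile near both ends of $w$. A union bound over the $O(\ln n)$ candidates gives the bound. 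For the lower bound $\crit(n)=\Omega(\pn(n)\ln n/n)$ one would exhibit that many extension-critical words; the prefix normal words whose trailing run meets or exceeds the leading run form a natural starting family, plausibly already of size $\Omega(\pn(n)/n)$, with the extra factor $\Theta(\ln n)$ to be recovered from the same $\Theta(\ln n)$-size band of possible dangerous lengths, each contributing its own share.

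The crux — and the reason this is stated as a conjecture — is that every one of these steps requires the asymptotic \emph{shape} of a typical prefix normal word (the joint law of its density, leading run, trailing run, and finer statistics) at a precision that is presently out of reach. The upper bound on $\crit(n)/\pn(n)$ is equivalent to $\pn(n)\ge 2^{n-O((\ln n)^{2})}$, which would sharpen Theorem~\ref{thm:lowBound} from a defect of order $\sqrt{n\ln n}$ down to order $(\ln n)^{2}$, an enormous improvement; the game-strategy argument behind Theorem~\ref{thm:lowBound} and the ``no long first $1$-run'' argument behind Theorem~\ref{thm:upBound} each exploit only one facet of prefix normality, whereas a proof would have to marry a multi-scale identification of the $\Theta((\ln n)^{2})$ ``forced'' positions with a genuine counting argument over the remaining free positions, and to do so tightly enough to land a constant-factor-correct exponent on both sides. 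I expect the lower bound on $\pn(n)$, equivalently the upper bound on $\crit(n)$, to be the main obstacle, since it demands by far the larger gain over the currently known estimates.
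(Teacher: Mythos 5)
You should be aware that the paper does not prove this statement at all: it is presented as a conjecture, supported only by the computations behind Fig.~\ref{figCrit} for $n\leq 49$ and by the one-line remark that the second estimate follows from the first via~\eqref{eq:critProduct}. Measured against that, the only rigorous content of your proposal is precisely that remark, and you carry it out correctly: from $\pn(n)=2^{n}\prod_{i=1}^{n-1}\bigl(1-\crit(i)/(2\pn(i))\bigr)$ and $0\leq\crit(i)\leq\pn(i)$ one gets $\ln\bigl(2^{n}/\pn(n)\bigr)=\Theta\bigl(\sum_{i=1}^{n-1}\crit(i)/\pn(i)\bigr)$, and $\sum_{i\leq n}\ln i/i=\Theta((\ln n)^{2})$ turns the first display into the second. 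Your reformulation of extension-criticality is also correct and goes beyond what the paper records: $w1$ fails to be prefix normal exactly when some index $m$ has $w_{m+1}=0$ and a \emph{tight} length-$m$ suffix (one with $P(w,m)$ ones), and the observation that a trailing $1$-run at least as long as the leading one forces criticality is sound. But everything past that point is a research programme rather than a proof, as you yourself acknowledge, so the central estimate $\crit(n)=\pn(n)\,\Theta(\ln n/n)$ remains exactly as open in your write-up as it is in the paper.

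Two specific points deserve correction even at the heuristic level. First, the parenthetical claim that a prefix normal word with no run of $\ell$ ones has no tight suffix of length much larger than $\ell$ is false: $w=(10)^{n/2}$ has maximal run length $1$, yet every even-length suffix is tight. That word is rescued only by the conjunction with $w_{m+1}=0$, so the object you must localise is the set of \emph{dangerous} indices, not the set of tight suffix lengths, and the claimed $O(\ln n)$-width window needs a different justification. Second, the upper bound $\crit(n)=O(\pn(n)\ln n/n)$ implies $\pn(n)\geq 2^{n-O((\ln n)^{2})}$ but is not equivalent to it, since the summed bound does not control individual terms. Neither issue changes the bottom line, which you diagnose accurately: a proof would sharpen the exponent defect in Theorem~\ref{thm:lowBound} from order $\sqrt{n\ln n}$ to order $(\ln n)^{2}$, which is far beyond the techniques of Sections~\ref{sec:enum1} and~\ref{sec:games}, and the statement stays a conjecture.
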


Note that the second estimate follows from the first one by  \eqref{eq:critProduct}.

\begin{figure}
\begin{minipage}[c]{8cm}
\begin{center}
\includegraphics[scale=0.3]{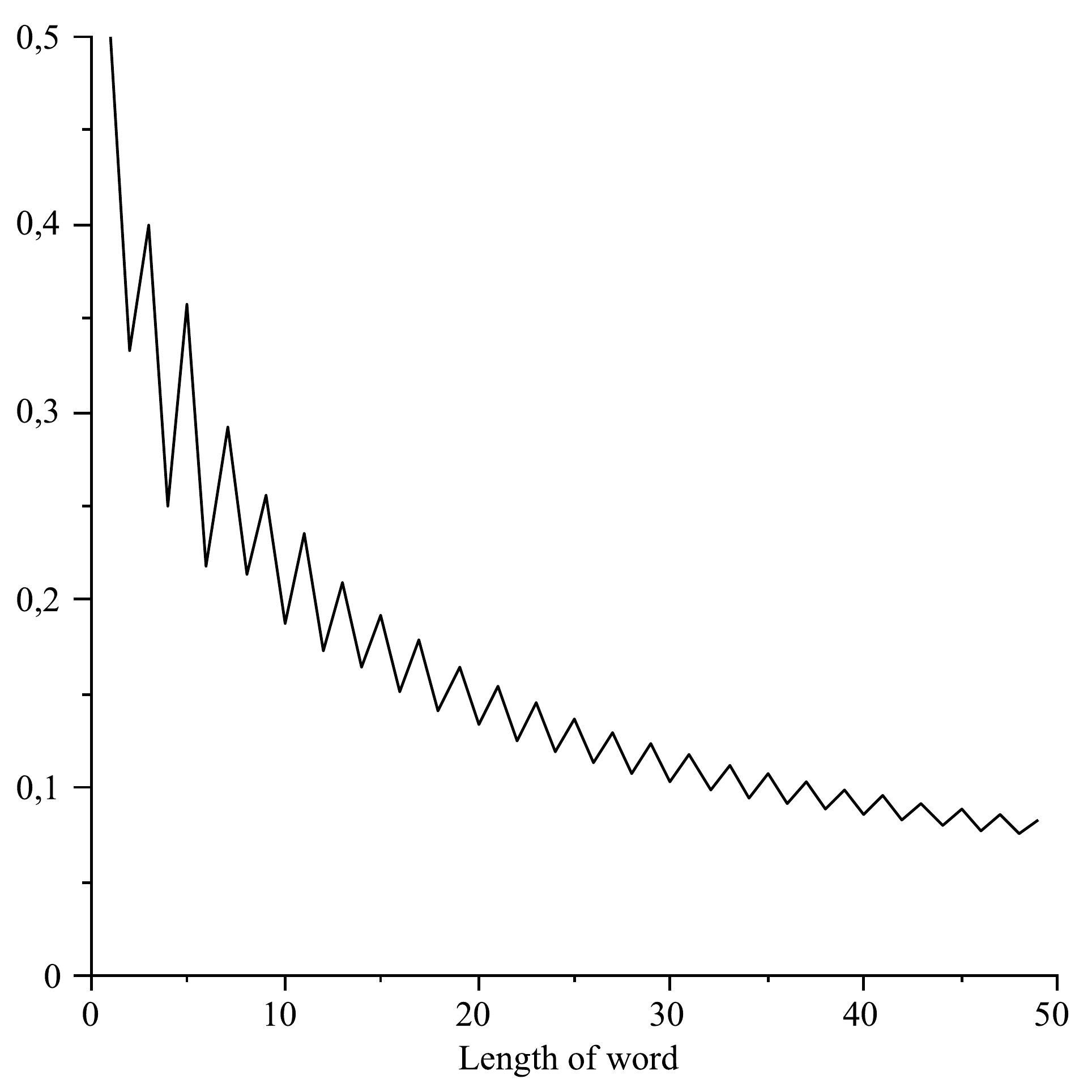}
\end{center}
\end{minipage}
\begin{minipage}[c]{8cm}
\begin{center}
\includegraphics[scale=0.3]{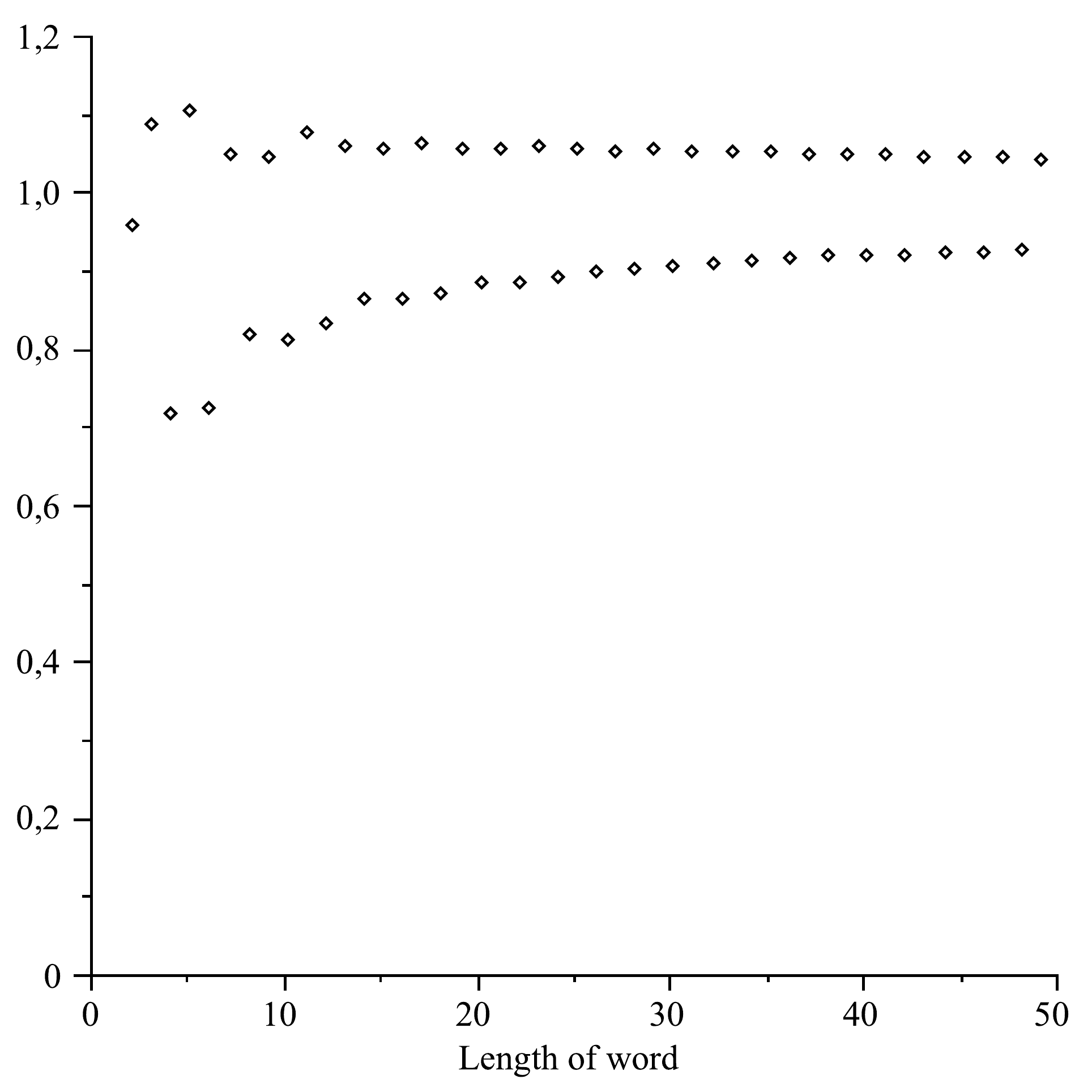}
\end{center}
\end{minipage}
\caption{The ratio $\frac{\crit(n)}{\pn(n)}$ (left), and the value $\frac{\crit(n)}{\pn(n)}\cdot\frac{n}{\ln n}$ (right).} 
\label{figCrit}
\end{figure}

\section{Prefix Normal Games}
\label{sec:games}
\textbf{Variant 1: Prefix normal game starting from empty positions.} See Introduction.

\begin{lemma}
\label{lemma:game1}
For $n\geq 7$ Bob has a winning strategy in the game starting from empty positions.
\end{lemma}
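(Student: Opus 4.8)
The plan is to produce an explicit winning strategy for Bob. Since a prefix normal word $w$ satisfies $F(w,k)=P(w,k)$ for all $k$, it is enough for Bob to force the final word to contain a short substring witnessing $F(w,k)>P(w,k)$ for some small $k$. We use two witnesses: (i) if $w$ starts with $10$ then $P(w,2)=1$, so any occurrence of $11$ in $w$ destroys prefix normality; (ii) if $w$ starts with $100$ then $P(w,3)=1$, so any two $1$s of $w$ at distance at most $2$ in positions $\geq 3$ destroy it.

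First I would dispose of the case where Alice's opening move is not $w_1:=1$. If she plays $w_1:=0$, Bob sets any other position to $1$; if she plays $w_j:=1$ with $j\geq 2$, Bob plays $w_1:=0$; if she plays $w_j:=0$ with $j\geq 2$, Bob plays $w_1:=0$ and on his next turn sets some empty position to $1$ (possible because $n\geq 7$). In all cases the final word has $w_1=0$ yet contains a $1$, so $F(w,1)=1>0=P(w,1)$ and Bob wins.

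So assume Alice opens with $w_1:=1$. Bob replies $w_2:=0$, fixing the prefix $10$; by witness (i), Bob now wins the moment a $11$ appears, and from here all of Bob's and all of Alice's remaining moves lie in positions $3,\dots,n$, which we view as the path on $n-2$ vertices (consecutive positions adjacent). For $n\geq 8$ Bob plays the Maker--Breaker game ``claim two adjacent vertices of a path on $n-2$ vertices'' as Maker moving second. Here is why Maker wins that game on every path with $m\geq 6$ vertices: after Breaker's move, at most three of the $m-2\geq 4$ internal vertices are equal to or adjacent to a claimed vertex, so Maker can take an internal vertex $v$ with $v$ and both its neighbours free; this creates the disjoint threats $\{v-1,v\}$ and $\{v,v+1\}$, Breaker kills at most one with the next move, and Maker completes the other. (It is harmless to treat every Alice move in positions $3,\dots,n$ as a Breaker move even when she plays a $1$, since that only gives Bob more $1$s.) The result is a $11$ inside $w_3\cdots w_n$, so Bob wins.

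The genuinely delicate case is $n=7$: there the ``$P_5$, Maker-second'' edge game is actually a Breaker win, so Bob must also use witness (ii). After $w_1=1$ and $w_2=0$ Bob has exactly moves $4$ and $6$ left, and I would finish with a finite case analysis on Alice's move $3$, which occupies one of positions $3,\dots,7$. If she plays $w_3:=1$, Bob answers $w_4:=1$. If she plays $w_3:=0$, Bob plays $w_6:=1$; then $P(w,3)=1$ and setting any of $w_4,w_5,w_7$ to $1$ creates a length-$3$ substring with two $1$s, and Bob forces one with move $6$. If she plays a $1$ in positions $4,\dots,7$, Bob answers $w_3:=0$ and then places a $1$ within distance $2$ of her $1$ --- always possible, since she can block at most one of the (at least two) eligible positions with move $5$. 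If she plays a $0$ in positions $4,\dots,7$, say at position $j$, Bob plays a $1$ at an interior vertex of the portion of the path $3,4,5,6,7$ unspoiled by $j$ (position $6$ when $j\in\{4,5\}$, position $4$ when $j\in\{6,7\}$), creating two $11$-threats of which Alice can block only one with move $5$. Checking these last two sub-cases --- in particular that Bob always has a free winning reply on move $6$ --- is the part of the argument that requires care.
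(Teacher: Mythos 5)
The paper itself gives no proof of this lemma (it is deferred to the full version), so your argument can only be judged on its own merits. Your overall architecture is sound and, in my view, the natural one: dispose of openings other than $w_1:=1$ via the length-$1$ witness, answer $w_1:=1$ with $w_2:=0$ so that any later occurrence of $11$ is fatal, reduce the case $n\geq 8$ to a Maker--Breaker pairing on a path with at least $6$ vertices (your ``unspoiled internal vertex'' double-threat argument there is correct, and treating Alice's $1$s as Breaker claims is indeed harmless), and handle $n=7$ by hand with the additional length-$3$ witness. All of that checks out.

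There is, however, one concrete step that fails as written: in the final sub-case for $n=7$ (Alice's second move is a $0$ at some $j\in\{4,\dots,7\}$), you claim Bob's reply ($w_6:=1$ for $j\in\{4,5\}$, $w_4:=1$ for $j\in\{6,7\}$) creates \emph{two} $11$-threats. For $j=5$ this is false: after $10\,\square\,\square\,0\,1\,\square$ the pair $\{5,6\}$ is already dead, so the only $11$-threat is $\{6,7\}$, and Alice kills it by playing $w_7:=0$ (indeed no vertex of $\{3,\dots,7\}$ has both neighbours free once position $5$ is taken, so no double $11$-threat exists at all). The line is still won, but only via your witness (ii): after $w_7:=0$ Bob plays $w_4:=1$, and whatever Alice then puts at position $3$ loses --- $w_3=1$ gives the factor $11$ at positions $3,4$ against $P(w,2)=1$, while $w_3=0$ gives the factor $101$ at positions $4$--$6$ against $P(w,3)=1$. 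You flagged exactly this spot as ``requiring care,'' and the repair uses only tools you already set up, but as submitted the justification for $j=5$ is wrong and must be replaced by this (or an equivalent) explicit continuation.
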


\noindent\textbf{Variant 2: Prefix normal game with blocks.} The game is played as follows. Now a block length of $2k$ is also specified, and we require that $2k$ divides $n$. The first $4k$ symbols are set to $1$ before the game starts (in order to give Alice a fair chance). Divide the remaining empty positions into blocks of length $2k$. Then Bob starts by picking a block with empty positions, and setting half of the positions of the block arbitrarily. Alice moves next and she sets the remaining $k$ positions in the same block as she wants. Now this block is completely filled. Then Bob picks another block, fills in half of it, etc. Iterate this process until every position is filled in.

\begin{lemma}
Alice has a winning strategy in the game with blocks, for any $k\geq 1$.
\end{lemma}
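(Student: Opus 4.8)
## Proof Proposal for: Alice has a winning strategy in the game with blocks

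The plan is to hand Alice a completely \emph{local} strategy and then verify combinatorially that every word it can produce is prefix normal. When Bob picks a block and sets $k$ of its $2k$ positions he has placed some number $j\in\{0,1,\dots,k\}$ of ones; Alice fills the remaining $k$ positions with exactly $k-j$ ones (placed anywhere), so that the completed block has exactly $k$ ones. This rule is well defined for every move of Bob and is independent of the order in which Bob attacks the blocks, so (assuming $n\ge 4k$, the case $n=4k$ being trivial) the final word is always of the form $w=1^{4k}B_1B_2\cdots B_m$ with $m=(n-4k)/(2k)$ and $|B_i|=2k$, $|B_i|_1=k$ for every $i$. It thus suffices to prove that every such $w$ is prefix normal.

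First I would dispose of all substrings that touch the initial run. I want $F(w,\ell)\le P(w,\ell)$ for every $\ell$; for $\ell\le 4k$ there is nothing to check because $P(w,\ell)=\ell$. Fix $\ell>4k$ and let $v$ be a length-$\ell$ substring of $w$. If $v$ meets $1^{4k}$, then $v=1^{a}v'$ with $1\le a\le 4k$ and $v'$ a prefix of $u:=B_1\cdots B_m$ of length $\ell-a>0$, so $|v|_1=a+P(u,\ell-a)$. Since $P(u,\cdot)$ increases by at most $1$ at each step, the map $a\mapsto a+P(u,\ell-a)$ is non-decreasing on $[1,4k]$, hence $|v|_1\le 4k+P(u,\ell-4k)=P(w,\ell)$. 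So the prefix $1^{4k}$ by itself handles every substring that touches it, and only substrings lying entirely inside $u$ remain, for which I must show $|v|_1\le P(w,\ell)=4k+P(u,\ell-4k)$.

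The remaining case is a block-counting inequality, using two estimates: (i) a length-$\ell$ window inside $u$ meets at most $\lceil \ell/(2k)\rceil+1$ of the blocks $B_i$, each carrying at most $k$ ones, so $F(u,\ell)\le k\bigl(\lceil\ell/(2k)\rceil+1\bigr)$; and (ii) the length-$(\ell-4k)$ prefix of $u$ contains at least $\lfloor(\ell-4k)/(2k)\rfloor$ complete blocks, so $P(w,\ell)=4k+P(u,\ell-4k)\ge 4k+k\lfloor(\ell-4k)/(2k)\rfloor$. Dividing the target inequality by $k$, it remains to check
\[
\Bigl\lceil \tfrac{\ell}{2k}\Bigr\rceil+1\ \le\ 4+\Bigl\lfloor \tfrac{\ell-4k}{2k}\Bigr\rfloor\ =\ \Bigl\lfloor \tfrac{\ell}{2k}\Bigr\rfloor+2,
\]
i.e.\ $\lceil x\rceil\le\lfloor x\rfloor+1$, which is always true; this would finish the argument.

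The computations above are routine, so I expect the only real point of insight to be recognizing that the pre-set run $1^{4k}$ is doing two separate jobs: it enables the monotonicity argument that kills every prefix-crossing substring, and it supplies a head start of exactly ``two blocks worth'' of ones, which is precisely the slack needed to beat the $\lceil\cdot\rceil$-versus-$\lfloor\cdot\rfloor$ discrepancy in the block count. A head start of only $2k$ would not suffice (it fails, e.g., for $\ell\equiv 2\pmod{2k}$), which is presumably why the rules fix $4k$ ones rather than $2k$. I would also confirm the trivial edge cases (small $n$, Bob attacking blocks out of order) but expect no difficulty there.
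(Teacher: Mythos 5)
Your proposal is correct and follows essentially the same route as the paper: Alice balances each block to exactly $k$ ones, substrings meeting the prefix $1^{4k}$ are reduced away, and the rest is the same block-counting comparison (your $\lceil \ell/(2k)\rceil+1$ blocks times $k$ ones versus the paper's $|v|_1\le m/2+k$, with the same two-block head start from $1^{4k}$ closing the gap). Your write-up is in fact a little more careful than the paper's at two points --- the explicit monotonicity argument for prefix-crossing substrings, and a block-count bound that is valid as stated --- but the underlying idea is identical.
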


\begin{proof}
Alice can always achieve that the current block contains exactly $k$ $1s$ and $k$ $0$s. Now consider a substring $v$ of length $m$ of the word $w=1^{4k}u$ that is obtained in the end. We have to show that the prefix of the same length has at least as many $1$s. Clearly, only $m\geq 4k$ has to be considered, and we can also assume that $v$ starts after position $4k$. The substring $v$ contains some $2k$-blocks in full, and some others partially. Let $p := \floor{\frac{m}{2k}}$, then $|v|_1 \leq (p+1)k \leq \frac{m}{2} + k$, while the number of $1$s in the prefix of length $m$ is at least $4k + (p-2)k \geq \frac m2 +k$, as claimed. 
\hfill \qed
\end{proof}
As a corollary, we can prove the lower bound in Theorem \ref{thm:lowBound}.

\begin{proof} (of Theorem~\ref{thm:lowBound}). There are at least as many prefix normal words of length $n$ as there are distinct words resulting after a game with blocks that Alice has won using the above strategy. Note that with this strategy, each block has exactly $k$ many $0$s and Bob is free to choose their positions within the block. Moreover, for different choices of $0$-positions by Bob, the resulting words will be different. So overall, Bob can achieve at least $\binom{2k}{k}^{(n-4k)/2k}$ different outcomes. If we set $k=\floor{\sqrt{n \log n}}$, and note that for $2k$ not dividing $n$, we can use $\pn(n)\geq \pn(\floor{n/2k}\cdot 2k)$, then we obtain: 
$-\ln (\pn(n)/2^n) = O(\sqrt{n\ln n}),$
and the statement follows.
\hfill \qed
\end{proof}

\section{Construction and testing algorithms}\label{sec:testing}

In this section, for strings $w\neq 1^n$, we use the notation $w=1^s0^t\gamma$, with $s\geq 0, t>0$ and $\gamma\in 1\Sigma^* \cup \{\epsilon\}$. Note that this notation is unique. We call $1^s0^t$ the {\em critical prefix} of $w$.

\subsection{A mechanical algorithm for computing the prefix normal forms}

We now present a {\em mechanical} algorithm for computing the prefix normal form of a word $w$. It uses a new algorithm technique we refer to as {\em sandy beach technique}, a technique that we think will be useful for many other similar problems. 

First observe that if you draw your word $w$ as in Fig.~\ref{fig:esempio}, then the Parikh set of $w$ will be the region spanned by drawing all the suffixes of $w$ starting from the origin. As we know, the prefix normal forms of $w$ will be the upper and the lower contour of the Parikh set, respectively. This leads to the following algorithm, that we can implement in any sand beach---for example, Lipari's Canneto (Fig.~\ref{fig:zollstock}).

Take a folding ruler (see Fig.~\ref{fig:zollstock}) and fold it in the form of your word. Now designate an origin in the sand. Put the folding ruler in the sand so that its beginning coincides with the origin. Next, move it backwards in the sand such that the position at the beginning of the $(n-1)$-length suffix coincides with the origin; then with the next shorter suffix and so on, until the right end of the folding ruler reaches the origin. The traced area to the right of the origin is the Parikh set of $w$, and its top and bottom boundaries, the prefix normal forms of $w$ (that you can save by taking a photo).

{\em Analysis: } The algorithm requires a quadratic amount of sand, but can outperform existing ones in running time if implemented by a very fast person.

\begin{figure}
\begin{center}
\begin{minipage}{6cm}
\includegraphics[width=\textwidth]{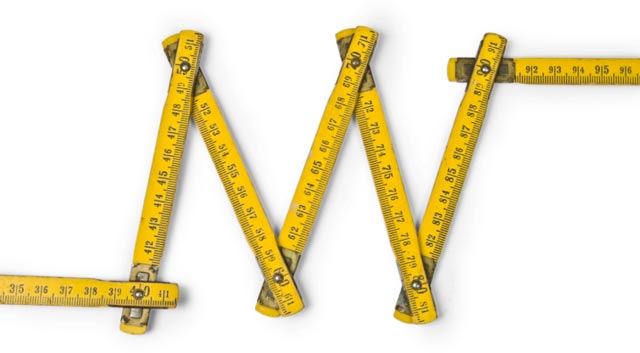}
\end{minipage}
\begin{minipage}{1cm}
\hspace{2mm}
\end{minipage}
\begin{minipage}{6cm}
\includegraphics[width=\textwidth]{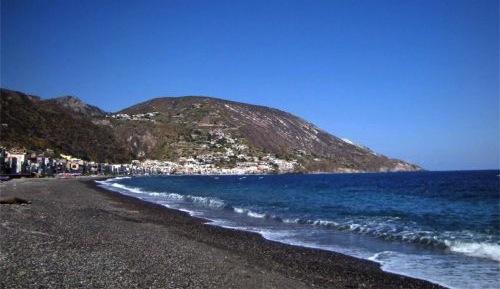}
\end{minipage}
\end{center}
\caption{The folding ruler used and a sandy beach (here the beautiful Liparis's Canneto black sand beach) in our mechanical prefix normal construction algorithm.\label{fig:zollstock}}
\end{figure}

\subsection{Testing algorithm}

It can be tested easily in $O(n^2)$ time if a word is prefix normal, by computing its $F$-function and comparing it to its prefixes; several other quadratic time tests were presented in~\cite{FL11}. Currently, the fastest algorithms for computing $F$ run in worst-case $O(n^{2}/\polylog\ n)$ time (references in the Introduction). Here we present another algorithm, which, although $O(n^2)$ in the worst-case, we believe could well outperform other algorithms when iterated on prefixes of increasing length. 

Given a word $w$ of length $n$ and density $d$, $w=1^s0^t\gamma$. Since the cases $d=0,n$ are trivial, we assume $0<d<n$.
Notice that, then, in order for $w$ to be prefix normal, $s>0$ must hold.
Now build a sequence of words $v_{0},v_{1},\ldots,v_{d-s}$, where $v_{0}=1^{d}0^{n-d}$ and $v_{d-s}=w$, in the following way: for every $i$, $v_{i+1}$ is obtained from $v_{i}$ by swapping the positions $d-i$ and $j$, where $j$ is the rightmost mismatch between $v_{i}$ and $w$. 
So for example, if $w=110100101$, we have the following sequence of words: $111110000$, $111100001$, $111000101$, $110100101$.

The following lemma follows straightforwardly from the results of~\cite{pn1}: 

\begin{lemma}
Given $w\in\Sigma^n$ with $|w|_1=d$, and the sequence $v_{0}=1^{d}0^{n-d}, v_{1}, \ldots, v_{d-s}=w$, we have that $w$ is prefix normal if and only if every $v_{i}$ is. 
\end{lemma}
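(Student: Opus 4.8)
The plan is to connect this statement to the machinery of~\cite{pn1} concerning bubble languages. First I would recall the key structural fact used there: prefix normal words form a bubble language, meaning that if $w = x 1 0^k y$ is prefix normal (with $x \in \Sigma^*$, $y \in \Sigma^*$), then $x 0 1 0^{k-1} y$ — the word obtained by moving the first $1$ of the final $1$-run one position to the right — is also prefix normal. Equivalently, the set of prefix normal words of fixed length $n$ and fixed density $d$ is closed under the operation of taking a $1$ and sliding it rightward past a maximal block of $0$s, and the ``densest'' such word $1^d 0^{n-d}$ is reachable from any prefix normal word of that density by a sequence of such slides. The sequence $v_0, v_1, \ldots, v_{d-s}$ constructed in the statement is designed to realize exactly such a chain of single-position moves, read in reverse.

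The main steps would be: (1) verify that the sequence is well-defined, i.e.\ that at each step the position $d-i$ indeed holds a $1$ in $v_i$ and the rightmost mismatch $j$ between $v_i$ and $w$ lies to the right of $d-i$ and holds a $0$ in $v_i$, so the swap is a legal rightward slide of a $1$; here one uses that $v_0 = 1^d 0^{n-d}$ and that each swap only moves $1$s rightward, together with the fact that $w = 1^s 0^t \gamma$ has its first $s$ positions equal to $1$, so the $1$s being moved are among positions $s+1, \ldots, d$. (2) Check that $v_{d-s} = w$: after $d-s$ swaps the leftmost $d-s$ ``movable'' $1$s (those beyond the critical prefix $1^s$) have been placed into their target positions in $w$, and since $w$ and $v_{d-s}$ have the same density and agree on the prefix $1^s$, they coincide. (3) Apply the bubble-language closure result from~\cite{pn1} in one direction: if $v_{i+1}$ is prefix normal then $v_i$ — obtained from $v_{i+1}$ by sliding a $1$ \emph{leftward} past a block of $0$s — is prefix normal, because leftward slides of the last moved $1$ correspond to the inverse of the bubble operation and preserve prefix normality within a bubble language (each $v_i$ is $v_{i+1}$ with a single $1$ moved left, and the family is closed under this). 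Running this from $i = d-s-1$ down to $0$ shows: if $w = v_{d-s}$ is prefix normal, then every $v_i$ is. (4) The converse direction (``if every $v_i$ is prefix normal then $w$ is'') is immediate since $w = v_{d-s}$ is itself one of the $v_i$.

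I expect the main obstacle to be step (1) together with the precise form of step (3): one must be careful that the swaps really are ``bubble'' moves of the exact shape to which the~\cite{pn1} closure lemma applies (a single $1$ jumping over a contiguous run of $0$s, with nothing in between), and that the indices $d-i$ and $j$ behave as claimed throughout — in particular that $j$ decreases appropriately and never collides with the already-fixed suffix, so that each intermediate $v_i$ is genuinely reachable in the bubble-language poset from $v_{i+1}$. Once the geometry of the swap sequence is pinned down, the statement ``follows straightforwardly'' exactly as the lemma asserts, since both implications are then direct consequences of the closure property of the bubble language $\LPN$ restricted to length $n$ and density $d$.
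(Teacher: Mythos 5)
Your overall route is the paper's own: the paper offers no proof beyond the remark that the lemma ``follows straightforwardly from the results of~\cite{pn1}'', i.e.\ from the fact that prefix normal words of fixed length and density form a bubble language, and your skeleton (the direction from ``every $v_i$'' to $w=v_{d-s}$ is trivial; the other direction is an induction transporting prefix normality from $v_{i+1}$ to $v_i$) is the intended one. The genuine gap is that the closure property you invoke is mis-stated, and in the forms you state it it is false. You first define the bubble operation as a \emph{rightward} slide of a $1$ and claim $\LPN$ is ``closed under the operation of taking a $1$ and sliding it rightward past a maximal block of $0$s'': this fails, e.g.\ $110100$ is prefix normal but sliding the $1$ in position $2$ rightward gives $101100$, which contains the factor $11$ while its length-$2$ prefix is $10$. (Also, $1^d0^{n-d}$ is reached from a prefix normal word by \emph{leftward} slides, not rightward ones.) Then in step (3) you rely on ``each $v_i$ is $v_{i+1}$ with a single $1$ moved left, and the family is closed under this'': this blanket claim also fails, e.g.\ $10101$ is prefix normal but swapping the occurrence of $01$ at positions $4$--$5$ to $10$ yields $10110$, which is not. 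The actual theorem of~\cite{RSW12,pn1} is one-sided and position-specific: $\LPN$ restricted to fixed length and density is closed under swapping the \emph{leftmost} occurrence of $01$ to $10$, and nothing more.

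That weaker, correct statement is exactly what is needed, and the repair is the point you yourself flag as the ``main obstacle''. Writing $v_i=1^{s'}0^{t'}\gamma'$ with $s'=d-i$, one checks (as in your steps (1)--(2)) that the rightmost mismatch $j$ lies inside the block $0^{t'}$, so $v_{i+1}=1^{s'-1}0^{j-s'}10^{s'+t'-j}\gamma'$; the $1$ that must be returned to position $d-i$ therefore sits immediately to the right of the leftmost $01$ of $v_{i+1}$, and returning it to the first run is a composition of $j-s'$ leftmost-$01\to10$ swaps, each of which preserves prefix normality by the bubble property. With the two blanket closure claims replaced by this one-sided, leftmost-occurrence version, your induction from $v_{d-s}=w$ down to $v_0$ goes through and the argument closes.
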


Moreover, as was shown there, it can be checked efficiently whether these strings are prefix normal. We summarize in the following lemma, and give a proof sketch and an example. 

\begin{lemma}[from \cite{pn1}]\label{lemma:childpnf}
Given a prefix normal word $w=1^s0^t\gamma$. Let $w'=1^{s-1}0^i10^{t-i}\gamma$, then it can be decided in linear time whether $w'$ is prefix normal.
\end{lemma}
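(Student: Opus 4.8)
The plan is to compare the $F$-function of $w'=1^{s-1}0^i10^{t-i}\gamma$ against that of $w=1^s0^t\gamma$, exploiting the fact that $w$ is already known to be prefix normal. Passing from $w$ to $w'$ moves one $1$ from position $s$ to position $s+i$; this changes the prefix counts $P(w',k)$ only in the range $s\le k< s+i$, where $P(w',k)=P(w,k)-1$, and leaves them equal to $P(w,k)$ otherwise. So the only lengths $k$ for which prefix normality could newly fail are those in $\{s,\dots,s+i-1\}$, and for each such $k$ we must check whether some length-$k$ substring of $w'$ contains $s$ ones (one more than the new prefix allows). Since $w$ was prefix normal, every length-$k$ substring of $w$ had at most $s$ ones, so it suffices to understand how the swap can \emph{increase} the maximal number of $1$s in a length-$k$ window.

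First I would observe that a length-$k$ window of $w'$ with $k<s+i$ that could be "dangerous" must contain the relocated $1$ (at position $s+i$) but not the vacated position $s$ — otherwise its $1$-count is at most that of the corresponding window in $w$, hence at most $s$, which is fine since the prefix of length $k\ge s$ still has $\ge s-1$... wait, more carefully: for $k$ in this range the prefix of $w'$ has exactly $s-1$ ones, so a window is dangerous iff it has $\ge s$ ones. Such a window lies entirely within the suffix of $w'$ starting at position $s+1$, i.e.\ within $0^i10^{t-i}\gamma$ where the leading block of zeros is relevant. The key quantity is therefore: what is the largest number of $1$s in a length-$k$ substring of $\gamma$ (or of $10^{t-i}\gamma$), for each $k\le t+i$? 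These are exactly the values $F$ restricted to a fixed suffix, and the point from~\cite{pn1} is that all of them, for all relevant $k$ simultaneously, can be extracted from the already-computed $F$-function of the \emph{parent} $w$ (or of $\gamma$) in linear total time — essentially because we only need to compare, for each $k\in\{s,\dots,s+i-1\}$, the best window touching the inserted $1$ against the threshold $s$, and the number of such $k$ is $i\le t$, with each comparison doable in $O(1)$ after an $O(n)$ preprocessing pass that records, for each position, the nearest prefix-normal-form information.

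Concretely the algorithm I would write is: (1) in $O(n)$ time compute $P(w,\cdot)$ and the positions of the $1$s of $w$; (2) for $k=s,s+1,\dots,s+i-1$, determine the maximum $1$-count among length-$k$ windows of $w'$ that start at a position $>s$ — this maximum is non-decreasing as the window slides and can be maintained incrementally with a two-pointer sweep over the $1$-positions of $\gamma$ together with the block $0^i10^{t-i}$ prepended — and compare it to $s-1$; if any exceeds $s-1$, declare $w'$ not prefix normal, otherwise declare it prefix normal. Correctness follows from the prefix-count analysis of the first paragraph plus the prefix-normality of $w$ (which handles all $k\ge s+i$ and all $k<s$ for free, since there the prefix counts of $w'$ equal those of $w$), and the running time is $O(n)$ because each of the two pointers moves monotonically through $w$.

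The main obstacle I anticipate is the bookkeeping in step (2): making precise that the sliding maximum of $1$-counts over the family of windows "starting after position $s$, of length $k$, for $k$ ranging over an interval" can genuinely be computed in $O(n)$ aggregate rather than $O(n)$ per value of $k$. This requires noticing that as $k$ increases by one the relevant optimal window's left endpoint is also non-decreasing (a monotonicity/exchange argument on where the densest run of $1$s sits relative to the inserted $1$), so a single amortized pass suffices; spelling this monotonicity out carefully — and checking the boundary cases $i=0$, $i=t$, and $\gamma=\epsilon$ — is the delicate part, and it is exactly the content that the excerpt defers to the proof sketch and example promised after the lemma, citing~\cite{pn1}.
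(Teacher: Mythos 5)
There is a genuine gap in your correctness argument, and it sits exactly where the cited lemma does its real work. You dismiss all lengths $k\ge s+i$ ``for free'' on the grounds that $P(w',k)=P(w,k)$ there. But equal prefix counts do not rule out a violation at such a $k$: the quantity that can change is $F(w',k)$, not only $P(w',k)$. A window that contains the relocated $1$ at position $s+i$ but \emph{not} the vacated position $s$ gains a $1$ relative to $w$, so $F(w',k)$ can equal $F(w,k)+1=P(w,k)+1>P(w',k)$. Concretely, $w=1100101$ is prefix normal ($s=2$, $t=2$, $\gamma=101$); taking $i=2$ gives $w'=1001101$, whose length-$4$ substring $1101$ has $3$ ones while $P(w',4)=2$ --- a violation at $k=4=s+i$, outside the range $\{s,\dots,s+i-1\}$ that your algorithm inspects. (In this instance there is also a violation at $k=2$, so the algorithm is not refuted by this example, but your \emph{argument} for why no length $\ge s+i$ needs checking is refuted.) The actual content of the lemma from~\cite{pn1}, which the paper's sketch states and the proof in the cited reference establishes, is that it suffices to check the \emph{single} length $k=s-1+i$, the critical prefix length of $w'$: any violation at any length forces one there, detectable by comparing $F(\gamma,k)$ with $s-1$ and the $1$-count of an appropriate prefix of $\gamma$ with $s-2$. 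You neither state nor prove a statement of this kind, and without it your restriction to $k\le s+i-1$ is unjustified.

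A secondary, acknowledged weakness: even within your chosen range, the claimed $O(n)$ total cost of the sweep over all $k\in\{s,\dots,s+i-1\}$ rests on a monotonicity of optimal window positions that you assert but do not prove. The paper's route avoids this entirely, since after reducing to one length the test is two constant-time comparisons against precomputed data. Your restriction to windows starting after position $s$ is fine (sliding a window left into the block $1^{s-1}$ never decreases its count below the prefix's), but the reduction from ``all lengths'' to ``one length'' is the indispensable step that is missing.
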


We will give an intuition via a picture, see Fig.~\ref{fig:proof_childpnf}. If $w'$ is not prefix normal, then there must be a $k$ and a substring $u$ of length $k$ s.t.\ $u$ has more $1$s than the prefix of length $k$. It can be shown that it suffices to check this for one value of $k$ only, namely for $k=s-1+t$, the  length of the critical prefix length of $w'$. The number of $1$s in this prefix is $s-1$. Now if such a $u$ exists, then it is either a substring of $\gamma$, in which case $F(\gamma,k)>s-1$; or it is a substring which contains the position of the newly swapped $1$ (both in grey in the third line). This latter case can be checked by computing the number of $1$s in the prefix of the appropriate length of $\gamma$ (in slightly darker grey) and checking whether it is greater than $s-2$.

\begin{figure}
\begin{center}
\includegraphics[width=\textwidth]{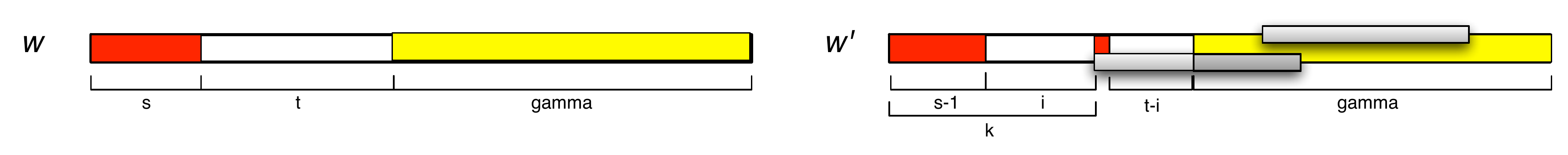}
\end{center}
\vspace{-1cm}
\caption{Proof of Lemma~\ref{lemma:childpnf}.\label{fig:proof_childpnf}}
\end{figure}

Thus, for $i = 1,\ldots, d-s$, we test if $v_{i+1}$ is prefix normal. If at some point, we receive a negative answer, then the test returns NO, otherwise it returns YES. Additional data structures for the algorithm are the $F$-function, which is updated to the current suffix following the critical prefix, up to the length of the next critical prefix (in linear time); and a variable $z$ containing the number of $1$s in the appropriate length prefix of $\gamma$. 

\medskip

{\em Example:} 
We test whether the word $w=110101101100100$ is prefix normal. 

\medskip

$
\begin{array}{l @{\hspace{5mm}} r @{\hspace{5mm}} r @{\hspace{5mm}} r @{\hspace{5mm}} r @{\hspace{5mm}} r @{\hspace{5mm}} l}
w  & 110101101100100 & \gamma & k & F(k) & z & F\\
\hline
v_1& 111111110000000 & \epsilon & 12 & 0 & 0 & 000000000000 \\
v_2& 111111100000100 & 100 & 9 & 1 & 1 & 111111111\\
v_3& 111111000100100 & 100100 & 8 & 2 & 2 & 11122222\\
v_4& 111110001100100 & 1100100 & 6 & 3 & 2 & 122233\\
v_5& 111100101100100 & 101100100 & 5 & 3 & 3 & 12233
\end{array}
$

\medskip

At this point we have $z + 1 = 4 > 3= s-1$  and therefore, we stop. Indeed, we can see that the next word to be generated, $v_6 = 1110001101100100$ is not be prefix normal, since it has a substring of length $5$ with $4$ ones, but the prefix of length $5$ has only $3$ ones. 

{\em Analysis: } The running time of the algorithm is $O(\sum_{i=d-s}^d p_i)$ in the worst case, where the $p_i$ are the positions of the $1$s in $w$, so in the worst case quadratic. 

\medskip

{\bf Iterating version.} The algorithm tests a condition on the suffixes starting at the $1$s, in increasing order of length, and compares them to a prefix where the remaining $1$s but one are in a block at the beginning. This implies that for some $w$ which are not prefix normal, e.g.\ $w=101^n, n>1$, the algorithm will stop very late, even though it is easy to see that the word is not prefix normal. This problem can be eliminated by running some linear time checks on the word first; the power of this approach will be demonstrated in the next section. 

Since we know that a word $w$ is prefix normal iff every prefix of $w$ is, we have that a word which is \emph{not} prefix normal has a shortest non-prefix-normal prefix. We therefore adapt the algorithm in order to test the prefix normality on the prefixes of $w$  of length powers of $2$, in increasing order. In the worst case, we apply the algorithm $\log n$ times. Since the test on the prefix of length $2^{i}$ takes $O(2^{2i})$ time, we have an overall $\sum_{i=0}^{\log n}O(2^{2i})=O(n^{2})$ worst case running time, so no worse than the original algorithm. 

We believe that our algorithm will perform well on strings which are ``close to prefix normal'' in the sense that they have long prefix normal prefixes, or they have passed the filters, i.e.\ that it will be expected strongly subquadratic, or even linear, time even on these strings.

%=====================================================================
\subsection{Membership testing with linear time filters}
\label{sec:combgenpnw}

In this section, we provide a two-phase membership tester for prefix normal words.  Experimental evidence indicates that on \emph{average} its running time is $O(n)$.

Suppose there is an $O(n)$ test that can be used to reject $2^n - 2^n/n$ of the binary strings outright (Phase I).   For the remaining $2^n/n$ strings,  apply the worst case $O(n^2)$ algorithm (Phase II).  This gives an $O(n)$-amortized time  algorithm when taken over all $2^n$ strings.  For such a two-phase  approach, let $M$ denote the strings not rejected by the first phase.  We are interested in the ratio 
${n M}/{2^n}.$ 
As $n$ grows, if it appears as though this ratio is bounded by a constant, then we would conjecture that such a membership tester runs in $O(n)$ average case time.

First we try a trivial $O(n)$ test:  a string will \emph{not} be prefix-normal if the longest substring of 1s is not at the prefix.  Applying this test as the first phase, the resulting ratios for some increasing values of $n$ are given in Table~\ref{tab:ratios}(a).  Since the ratios are increasing as $n$ increases, we require a more advanced rejection test.

%----------------------------------------------
\begin{table}[t]
\label{tab:ratios}
\begin{center}
\begin{tabular}{c|c|c|c|c|c|c|c|c}
$n$ & 10 & 12 & 14 & 16 & 18 & 20 & 22 & 24 \\ \hline
(a) & 2.500 & 2.561 & 2.602 & 2.631 & 2.656 & 2.675 & 2.693 & 2.708 \\ \hline
(b) & 2.168 & 2.142 & 2.121 & 1.106 & 2.093 & 2.083 & 2.075 & 2.067 \\
\end{tabular}

\end{center}
  \caption{(a) Ratios from the trivial rejection test.  \ (b) Ratios by adding secondary rejection test.}
\end{table}
%----------------------------------------------

The next attempt uses a more compact \emph{run-length} representation for $w$.  Let $w$ be  represented by a series of $c$ blocks, which are maximal substrings of the form $1^*0^*$. Each block $B_i$ is composed of two integers $(s_i, t_i)$ representing the number of 1s and 0s respectively. For example, the string  
11100101011100110
can be represented by $B_1B_2B_3B_4B_5 = (3, 2)(1, 1)(1,1)(3,2)(2,1)$.  Such a representation can easily be found in $O(n)$ time.  
A word $w$ will \emph{not} be prefix normal word if it contains a substring of the form $1^{i}0^j1^k$ such that
$i+j+k \leq s_1 + t_1$ and $i+k > s_1$ (the substring is no longer, yet has more 1s than the critical prefix). Thus, a word will not be prefix normal, if for some $2 \leq i \leq c$:
$$s_{i-1} + t_{i-1} + s_i \leq  s_1 + t_1     \text{  \  \ and  \  \ }  s_{i-1} + s_{i} > s_1.$$
By applying this additional test in our first phase, we obtain algorithm {\sc MemberPN}($w$), consisting of the two rejection tests, followed by any simple quadratic time algorithm. 

The ratios that result from this algorithm are given in Table~\ref{tab:ratios}(b).  Since the ratios are decreasing as $n$ increases, we make the following conjecture.

\begin{conjecture}
The membership tester {\sc MemberPN}($w$) for prefix normal words \blue{funs} in average case $O(n)$-time.
\end{conjecture}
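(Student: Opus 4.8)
The statement to prove is the final conjecture: that {\sc MemberPN}($w$) runs in average-case $O(n)$ time. Since this is explicitly stated as a \emph{conjecture}, a full proof is not expected; what is wanted is a plan for how one would attack it. The quantity controlling the analysis is $M = M(n)$, the number of length-$n$ strings that survive both rejection filters (Phase~I), since Phase~I costs $O(n)$ per string and Phase~II costs $O(n^2)$ only on the $M$ survivors, giving total cost $O(n\cdot 2^n) + O(n^2\cdot M)$, which is $O(n\cdot 2^n)$ amortized precisely when $M = O(2^n/n)$, i.e.\ when the ratio $nM/2^n$ is bounded.

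\medskip

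\noindent\textbf{Proof plan.} The plan is to bound $M(n)$ from above by counting the strings that pass both filters. First I would set up the run-length encoding $w = B_1\cdots B_c$ with $B_i = (s_i,t_i)$ as in the text, and recall that surviving Phase~I means (i) the first $1$-run is a longest $1$-run, so $s_1 \geq s_i$ for all $i$, and (ii) for every $2\leq i\leq c$, it is \emph{not} the case that both $s_{i-1}+t_{i-1}+s_i \leq s_1+t_1$ and $s_{i-1}+s_i > s_1$. I would split on the length $\ell := s_1+t_1$ of the critical prefix, exactly as in the proof of Theorem~\ref{thm:upBound}. For the ``Case~2''-type contribution, strings with $s_1 \geq k$ for $k = \Theta(\ln n)$ number at most $2^{n-k} = O(2^n/k)$; these are already within the target bound up to the $\ln n$ factor, and in fact the finer condition $s_1+t_1$ large should let one do better. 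The crux is the complementary regime, where the critical prefix is short, say $\ell \leq k$. Here I would use filter~(ii): because $s_1 \leq \ell \leq k$ is small, the second conjunct $s_{i-1}+s_i > s_1$ is easy to trigger, so avoiding it forces $s_{i-1}+t_{i-1}+s_i > s_1+t_1 = \ell$ for \emph{every} $i\geq 2$ with $s_{i-1}+s_i>s_1$ --- and combined with $s_1\geq s_i$ for all $i$, this means $w$ cannot contain any reasonably short window that is ``too heavy.'' The goal is to show that these constraints force a block structure so restrictive that the number of such strings is $o(2^n/n)$, or at worst $O(2^n/n)$.

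\medskip

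\noindent The key steps, in order, would be: (1) reduce the average-time claim to proving $M(n) = O(2^n/n)$; (2) partition survivors by the critical-prefix length $\ell$; (3) handle large $\ell$ by the trivial $2^{n-\ell}$ bound and a summation over $\ell$; (4) for small $\ell$, extract from the two filters a forbidden-pattern statement of the form ``no window of length $\leq \ell+O(1)$ has more than $s_1$ ones,'' and then run a block-counting argument analogous to Case~1 of Theorem~\ref{thm:upBound}: chop $w$ into blocks of length $\Theta(\ell)$ and argue each block has strictly fewer than $2^{\Theta(\ell)}$ admissible fillings, yielding a bound like $(2^{c\ell} - 1)^{n/(c\ell)} \cdot (\text{prefix choices})$; (5) optimize the threshold between the two regimes, mirroring the choice $2^k k^2 \ln 2 \leq n$ from Theorem~\ref{thm:upBound}, to land at $O(2^n (\ln n)^?/n)$ and check whether the exponent of $\ln n$ can be driven to $0$.

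\medskip

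\noindent\textbf{Main obstacle.} The hard part will be step~(4): the second filter only inspects \emph{adjacent} block pairs $B_{i-1}B_i$, whereas prefix-normality violations can be caused by windows spanning many blocks, so surviving the filter does \emph{not} immediately give a clean ``no heavy short window'' guarantee --- one has to argue that a multi-block heavy window would have forced some adjacent pair to violate the filter, which is not obviously true and may in fact be false, meaning $M(n)$ could genuinely be larger than $O(2^n/n)$. Indeed the experimental ratios in Table~\ref{tab:ratios}(b) are only \emph{observed} to decrease, and the dip at $n=16$ (the anomalous $1.106$) suggests the data themselves are delicate. A realistic outcome is that one proves $M(n) = O\!\left(2^n\,\mathrm{polylog}(n)/n\right)$, which already gives average time $O(n\cdot\mathrm{polylog}(n))$ and strongly supports the conjecture, while closing the gap to exactly $O(n)$ would require either a sharper combinatorial lemma relating adjacent-block filtering to global prefix-normality, or the design of a third, genuinely non-local linear-time filter.
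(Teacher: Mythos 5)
This statement is a conjecture, and the paper offers no proof of it: its only support is the empirical data in Table~\ref{tab:ratios}(b), namely that the observed ratios $nM/2^n$ decrease as $n$ grows, from which the authors infer (as set out in the preamble of Section~\ref{sec:combgenpnw}) that the amortized cost over all $2^n$ inputs is $O(n)$. Your reduction of the claim to the bound $M(n)=O(2^n/n)$ is exactly the paper's own framing, so on that point you and the paper agree. Everything beyond that --- the partition by critical-prefix length, the two regimes mirroring the proof of Theorem~\ref{thm:upBound}, the block-counting for short critical prefixes --- is material the paper does not contain, and it is a reasonable sketch of how one might actually attack the conjecture. It is not, however, a proof, and you are right to flag the central obstacle yourself: the second filter only constrains adjacent block pairs $B_{i-1}B_i$, while a prefix-normality violation (and hence survival of Phase~I by a non-prefix-normal word) can be caused by a window spanning many blocks, so the filters do not obviously yield the ``no heavy short window'' property your step~(4) needs. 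Your fallback target of $M(n)=O(2^n\,\mathrm{polylog}(n)/n)$ is plausible but would only give average time $O(n\,\mathrm{polylog}(n))$, not the conjectured $O(n)$. One small caution: the value $1.106$ at $n=16$ in Table~\ref{tab:ratios}(b) is almost certainly a typographical error for $2.106$ (it sits between $2.121$ and $2.093$), so it should not be read as a genuine anomaly in the data. In short: neither the paper nor your proposal proves the statement; your proposal adds a credible but incomplete strategy on top of the paper's purely experimental evidence.
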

We note that there are several other trivial \emph{rejection} tests that run in $O(n)$ time, however these two were sufficient to obtain our desired experimental results.

\textbf {Acknowledgements.} We thank Ferdinando Cicalese who pointed us to~\cite{MelBrooks} and thus contributed to the {\em fun} part of our paper.

\bibliographystyle{abbrv}
%\bibliography{PNF_enumeration}

\end{document}